\documentclass[9pt,shortpaper,web,twoside]{ieeecolor}
\usepackage{generic}
\usepackage{cite}
\usepackage{amsmath,amssymb,amsfonts}
\usepackage{graphicx}
\usepackage{textcomp}
\usepackage{epstopdf}
\usepackage{diagbox}
\usepackage{booktabs}
\usepackage{multirow}
\usepackage{makecell}
\usepackage{subfig}
\usepackage{algorithm}
\usepackage{algorithmicx}
\usepackage[noend]{algpseudocode}

\def\BibTeX{{\rm B\kern-.05em{\sc i\kern-.025em b}\kern-.08em
    T\kern-.1667em\lower.7ex\hbox{E}\kern-.125emX}}
\DeclareMathOperator{\s}{\mathbf{s}}
\DeclareMathOperator{\x}{\mathbf{x}}
\DeclareMathOperator{\z}{\mathbf{z}}
\DeclareMathOperator{\aaa}{\mathbf{a}}
\DeclareMathOperator{\vv}{\mathbf{v}}
\DeclareMathOperator{\ii}{\textbf{i}}

\DeclareMathOperator{\col}{col}
\DeclareMathOperator{\diag}{diag}

\DeclareMathOperator{\BR}{BR}
\DeclareMathOperator{\I}{\mathbf{I}}
\DeclareMathOperator{\1}{\mathbf{1}}
\DeclareMathOperator{\0}{\mathbf{0}}
\newtheorem{thm}{Theorem}
\newtheorem{defi}{Definition}

\newtheorem{example}{Example}
\newtheorem{rmk}{Remark}
\newtheorem{asp}{Assumption}

\makeatletter
\def\fnum@figure{\textcolor{subsectioncolor}{\sf Fig.~\thefigure}}
\def\fnum@table{\textcolor{subsectioncolor}{\sf TABLE~\thetable}}
\def\thm@space@setup{%
  \thm@preskip=0.5pt
  \thm@postskip=\thm@preskip 
}
\makeatother
\begin{document}

\title{
Seeking Nash Equilibrium in Non-cooperative Quadratic Games Under Delayed Information Exchange}
\author{Kaichen Jiang, Yuyue Yan, \IEEEmembership{Member, IEEE}, Mingda Yue, and Yuhu Wu, \IEEEmembership{Member, IEEE}
\thanks{This work has been submitted to the IEEE for possible publication. Copyright may be transferred without notice, after which this version may no longer be accessible.
This work was supported in part by the National Natural Science Foundation of China under Grant U24A20263, 62595804. \textit{(Corresponding author: Yuhu Wu.)}}
\thanks{K. C. Jiang, M. D. Yue and Y. H. Wu are with the School of Control Science and Engineering, Dalian University of Technology, Dalian 116024, China (e-mail: jiangkc@mail.dlut.edu.cn; yuemingda@mail.dlut.edu.cn; wuyuhu@dlut.edu.cn).}
\thanks{Y. Yan is with the Department of Information Physics and Computing, The University of Tokyo, Bunkyo, Tokyo 113-8656, Japan (e-mail: yan-yuyue@g.ecc.u-tokyo.ac.jp).}
}

\maketitle
\begin{abstract}
  In this paper, we investigate the seeking of Nash equilibrium (NE) in a non-cooperative quadratic game where all agents exchange their delayed strategy information with their neighbors.
  To extend best-response algorithms to the delayed information setting, an estimation mechanism for each agent to estimate the current strategy profile is designed.
  Based on the best-response strategy to the estimations, the strategy profile dynamics of all agents is established, which is revealed to converge asymptotically to the NE when agents exchange multi-step-delay information via the Lyapunov-Krasovskii functional approach.
  In the scenario where agents exchange one-step-delay information, the exponential convergence of the strategy profile dynamics to the NE can be guaranteed by restricting the learning rate to less than an upper bound.
  Moreover, a lower bound on the learning rate for instability of the NE is proposed.
  Numerical simulations are provided for verifying the developed results.
\end{abstract}

\begin{IEEEkeywords}\small
Non-cooperative quadratic games, seeking algorithm, delayed information exchange, Nash equilibrium
\end{IEEEkeywords}
\vspace{-2pt}
\section{Introduction}
\IEEEPARstart{N}{on-cooperative} games are used for dealing with the scenarios where agents behave selfishly with conflicting interests, such as human-machine interaction \cite{Varga2023}, traffic routing \cite{Varga2022}, cluster confrontation \cite{Liu2024}, etc.
As an important components of non-cooperative games, non-cooperative quadratic games (i.e., games with quadratic cost functions) are of significant interest due to their suitable roles in modeling various scenarios and practical applications \cite{Frihauf2012,Li2021,Martirosyan2024,Xu2024}.
Another reason, as discussed in Chapter 4.6 of \cite{Basar1997}, is that quadratic cost functions act as second-order approximations to various nonlinear cost functions, which are analytically tractable and admit closed-form equilibrium solutions.

Nash equilibrium (NE) is an attractive concept \cite{Nash1951} in non-cooperative games.
The existence of NE can be handled with many approaches \cite{Monderer1996,Liu2019}, however, the seeking of it is not easy.
In designing NE seeking algorithms, the strategy updating dynamics appears to be an important aspect since it allows agents to adjust strategies timely to cope with the information exchange until the strategy profile reaches a NE.
Generally speaking, the strategy updating dynamics can be divided into two types: continuous-time type \cite{Ye2017,Yan2024a,Yan2024b,Wang2025} and discrete-time type \cite{Marden2009,Belgioioso2023,Pavel2020}.
In this paper, we focus on discrete-time dynamics which can be easily implemented both in theoretical analysis and practical applications.

The information exchanged among agents plays an important role in designing strategy updating dynamics for each agent.
Many existing works concentrate on exchanging adequate strategy information among agents to construct valid strategy updating dynamics, such as fictitious play \cite{Shamma2005}, conditional imitation \cite{Govaert2021} and best-response dynamics \cite{Hart2006}.
However, when the number of agents becomes more larger and the communication graph becomes more complicated, the adequate strategy information exchange may not be admitted \cite{Lou2016,Fang2022,Xu2023}.
One approach for addressing this issue is to introduce agent's estimation of current strategy profile.
Through the estimation exchange, effective strategy updating dynamics can be established in either a semi-decentralized style \cite{Belgioioso2023} or a distributed style \cite{Ye2017,Huang2023}.

However, in many practical scenarios, both the strategy information and the estimation information being exchanged may be {outdated}, which is called \emph{delayed information exchange} in this paper.
On one hand, the delayed information may be caused by agents themselves since some of them may only be willing to disclose outdated information from earlier stages due to their selfishness and privacy concerns.
For example, in a dynamic bargaining, the buyer may hide the current strategy and  use former strategy to help achieve a negotiation with the seller \cite{Cramton1991,Spector2022}.
On the other hand, the communication delay phenomenon also gives rise to the delayed information, which is quite common in distributed networked systems \cite{Wang2022,Ai2020,LiuJ2024,Li2025}.

Taking the delayed information into consideration brings great challenges on both the design and the convergence analysis of NE seeking algorithms.
Some scholars addressed the effect of communication delay by embedding virtual agents into the original communication graph, where the amount of virtual agents is the sum  of all agents' delays \cite{Nedic2010,LiuJ2024,Chen2024}.
This approach run the risk of enlarging the dimensionality of adjacency matrix of the reformulated
communication graph.
Another widely adopted approach involves leveraging Lyapunov-Krasovskii functional theory to establish the convergence condition for the designed algorithm \cite{Ai2020,Seuret2015}.
However, the construction of an apt Lyapunov-Krasovskii functional candidate remains challenging.
When addressing more complex scenarios such as agents with continuous dynamics and model uncertainties \cite{Wang2022}, the algorithm design and convergence analysis  become significantly   challenging.


Based on the above discussions, in this paper, we consider a special class of decision dynamics for non-cooperative quadratic games in which each agent only exchanges the delayed strategy and estimation information with his neighbors to hide sensitive information.
We aim to design an estimation update mechanism that incorporates only delayed information, enabling agents to estimate the current strategy profile while helping protect their current strategies from disclosure.
Combining the best-response strategy to the updated estimation, the strategy updating dynamics are established for agent to achieve the convergence of the NE of the quadratic game.
After analyzing the influence of the delayed information on the designed seeking dynamics, we obtain two interesting results for different cases of delay.
It turns out that exchanging information with longer delays among agents necessitates reducing the learning rate to ensure the convergence of the strategy update dynamics.
The main contributions in this paper are twofold:
\begin{itemize}
  \item For the case where the agents exchange $\tau$-step-delay information with $\tau\geq2$, we prove 
      that using an appropriate learning rate in the designed estimation updating mechanism, the established strategy updating dynamics is ensured to asymptotically converge to the NE via the Lyapunov-Krasovskii functional approach.
  \item Different from the asymptotical converge result obtained for $\tau\geq2$, when agents exchange one-step-delay information (i.e., $\tau=1$), we prove 
      that the established strategy updating dynamics can exponentially converge to the NE by using a proper learning rate less than an upper bound. 
      Moreover, we prove 
      that the strategy updating dynamics is divergent when the learning rate exceeds a lower bound.
\end{itemize}

The rest of the paper is organized as follows.
In Section II, the problem formulation on non-cooperative quadratic game model and best-response-based strategy profile dynamics are provided.
In Section III, agent's estimation dynamics and the corresponding strategy profile dynamics are established.
Sufficient conditions to guarantee the convergence of the NE are proposed for different steps of delays, along with numerical examples.
Section IV is the concluding remarks.

Notations: We write $\mathbb{R}$ (resp. $\mathbb{C}$) and $\mathbb{N}$ (resp. $\mathbb{N}_+$) for the set of real (resp. complex) and non-negative (resp. positive) natural numbers.
We write $\mathbb{R}^n$, $\mathbb{R}^{n\times m}$ and $\mathbb{S}_+^n$ for the space of $n$-dimensional real column vectors, $n\times m$ real matrices and $n$-dimensional symmetric positive definite matrices.
Given a matrix $A\in\mathbb{R}^{n\times n}$, $[A]_{ij}$ denotes the $i$th row $j$th column element of $A$ 
and $\rho(A)$ denotes the spectral radius of $A$.
$\mathbf{I}_n$ denotes the $n$-dimensional identity matrix.
$\mathbf{1}_n$ (resp. $\mathbf{0}_n$) denotes the $n$-dimensional all one (resp. all zero) column vector.
For notation ease, we use $\mathbf{I}$ and $\mathbf{0}$ to represent the identity matrix and zero matrix with required dimensions when there is no ambiguity.
$\top$ is the transpose and $\otimes$ is the Kronecker product.
Given $p$ vectors $\mathbf{v}_i\in\mathbb{R}^n$, $i=1,\ldots,p$, define $\col(\mathbf{v}_1,\ldots,\mathbf{v}_p)=(\mathbf{v}_1^\top,\ldots,\mathbf{v}_p^\top)^\top$ as a stacking column vector.
Given $p$ real matrices $M_i\in\mathbb{R}^{n\times m}$, $i=1,\ldots,p$, define $\diag(M_1,\ldots,M_p)$ as a block diagonal matrix whose diagonal blocks are $M_1,\ldots,M_p$.

An undirected graph is denoted by $(\mathcal{N},\mathcal{E})$ with $\mathcal{N}$ as the set of nodes and $\mathcal{E}$ as the set of edges.
The adjacent matrix of $(\mathcal{N},\mathcal{E})$ is denoted by an $n$-dimensional symmetric matrix $\mathcal{A}$ whose leading diagonal elements are zero and other elements $\mathcal{A}_{ij}=\alpha_{ij}$ satisfy $\alpha_{ij}=1$ if there exists an edge links nodes pair $(i,j)$ and $\alpha_{ij}=0$ otherwise.
The set of neighbors of  node $i$ is denoted by $\mathcal{N}_i\triangleq\{j\in\mathcal{N}: \alpha_{ij}=1\}$ and the Laplacian matrix of  $(\mathcal{N},\mathcal{E})$ is denoted by $\mathcal{L}$.

\section{Problem Formulation}

\subsection{Static Quadratic Game}
{Consider a non-cooperative quadratic game with the set of agents denoted by $\mathcal{N}\!=\!\{1,2,\cdots,n\}$ where agent $i$'s strategy  is denoted by $s_i\in \mathbb{R}$.
The strategy profile of all agents is denoted by a column vector $\mathbf{s}\triangleq(s_1,\ldots,s_n)^\top=(s_i,\mathbf{s}_{-i})^\top\in\mathbb{R}^n$, in which $\mathbf{s}_{-i}$ is the strategy profile excluding $i$.
Similar to \cite{Frihauf2012,Yan2022}, agent $i$'s quadratic payoff function  $J_i: \mathbb{R}^n\rightarrow\mathbb{R}$ is only available to $i$ and is given by
\begin{align}\label{CF}
J_i(s_i,\mathbf{s}_{-i})=\frac{1}{2}\s^{\top}A_i \s+\mathbf{b}_i^{\top} \s+g_i,
\end{align}
where $A_i=[a_{kj}^i]_{k,j\in\mathcal{N}}\in \mathbb{R}^{n\times n}$ is a symmetric  matrix satisfying $a_{ii}^i<0$, $\mathbf{b}_i=(b_1^i,\cdots,b_n^i)^\top\in\mathbb{R}^n$, and $g_i\in\mathbb{R}$.
The quadratic game investigated in this paper is denoted by $G=\{\mathcal{N},$ $(s_i)_{i\in\mathcal{N}},(J_i)_{i\in\mathcal{N}}\}$, where agents  aim to maximize their payoffs.

In noncooperative games, NE is a strategy profile where no agent has incentive to unilaterally change his strategy for earning more payoff.
The definition is given in the following.
\begin{defi}\label{NEdefi}
A strategy profile $\s^*=(s^{*}_i,\s^{*}_{-i})\in\mathbb{R}^n$ in  quadratic game $G$ is called a (pure-strategy) NE if for any agent $i\in\mathcal{N}$,
\begin{align}
J_i(s^{*}_i,\s^{*}_{-i})\geq J_i(s_i,\s^{*}_{-i}),\ \forall s_i\in S_i.
\end{align}
\end{defi}

Here, we adopt the following assumption utilized in \cite{Frihauf2012,Ye2017}, to ensure the existence and uniqueness of NE in the quadratic game $G$.
\begin{asp}\label{strctilydominant}
The elements $a_{ij}^i$, $j=1,\ldots,n$ in $A_i$ satisfy
\begin{align}\label{basecondition}
|a_{ii}^i|>\sum\limits_{j\in\mathcal{N},j\neq i}|a_{ij}^i|,\  \forall i\in\mathcal{N}.
\end{align}
\end{asp}
\subsection{Myopic Best-Response-Based Seeking Dynamics}
In this paper, we aim to design a distributed algorithm for seeking the NE of the quadratic game $G$ by utilizing the information exchanged among agents through an undirected graph $(\mathcal{N},\mathcal{E})$, which is assumed to be connected.

A commonly used seeking dynamics is the so-called myopic best-response-based dynamics \cite{Liu2019}, which means each agent updates his strategy according to the strategy profile at previous stage.
Let $s_i(t)$ denote agent $i$'s strategy at stage $t$, $t\in\mathbb{N}$, then agent $i$ updates his strategy $s_i(t+1)$ by
\begin{align}\label{Complete}
s_i(t+1)\in\BR_i(\s_{-i}),
\end{align}
where $\BR_i(\s_{-i})$ denotes the set of best-response strategies of agent $i\in\mathcal{N}$ that maximizes his payoff function $J_i(s_i,\s_{-i})$ given other agents' strategies $\s_{-i}$, which is defined by
\begin{align}\nonumber
\BR_i(\s_{-i})=&\arg\max\limits_{s'_i\in \mathbb{R}} \Big(\frac{1}{2}(s'_i,\s_{-i})^{\!\top}A_i (s'_i,\s_{-i})\!+\!\mathbf{b}_i^{\!\top}(s'_i,\s_{-i})\Big)\\\label{BRdefi}
=&-\frac{1}{a_{ii}^i}{\big(\!\sum\limits_{j\neq i,j\in\mathcal{N}}\!a_{ij}^i s_j+b_i^i\big)}.
\end{align}
Note that $\BR_i(\s_{-i})$ is a singleton in this paper since it is uniquely determined by strategies $\s_{-i}$ according to \eqref{BRdefi}.
Thus, the myopic best-response-based strategy profile dynamics 
can be represented by the following compact form
\begin{align}\label{QFsys}
\s(t+1)=M\s(t)+\mathbf{c},
\end{align}
where $\mathbf{c}=-({b_{1}^1}/{a_{11}^1},\ldots,{{b_{n}^n}}/{a_{nn}^n})^\top$ and $M=[\aaa_1,\ldots,\aaa_n]^\top$ with $\aaa_i\triangleq$ $-\col\big({ a_{i1}^i}/{a_{ii}^i},\ldots,
{a_{ii-1}^i}/{a_{ii}^i},0,{a_{ii+1}^i}/{a_{ii}^i},\ldots,{a_{in}^i}/{a_{ii}^i}\big)\in\mathbb{R}^n$ for $i\in\mathcal{N}$.

Similar to \cite{Frihauf2012,Ye2017}, it can be verified that under Assumption~\ref{strctilydominant}, the NE of the quadratic game $G$ is represented by $(\I-M)^{-1}\mathbf{c}$, which is exactly the equilibrium state of \eqref{QFsys}.
\begin{rmk}
Note that none of agents knows the matrix $M$ and $\mathbf{c}$ (hence they neither know the NE $\s^*$) since agents only know their individual payoff functions, i.e., $A_i$ and $\mathbf{b}_i$ are known to agent $i$ only.
\end{rmk}

It is worth noting that the dynamics \eqref{QFsys} require the full observation of the other agents strategy $s_{-i}$ at current stage $t$, which may not be allowed in many practical scenarios
\cite{Lou2016,Fang2022,Xu2023}.
Moreover, in many cases, the information is delayed due to some communication issue \cite{LiuJ2024} or information disclosure problem \cite{Cramton1991}.
For example, agents may only disclose $s_i(t-\tau)$ at stage $t$ to keep privacy of his recent decision strategy.
Therefore, in this paper, we are interested to design the NE seeking algorithm with delayed information exchange.

\section{Design of Seeking Algorithm for NE}
\subsection{Design of Seeking Algorithm}

Similar to \cite{Lou2016,Fang2022} where the current strategies $\s_{-i}(t)$ of the others is unknown to agent $i$, we deal with the problem by introducing agent $i$'s estimation $\hat{\s}^i(t)$ on $\s(t)$, which is composed of the estimations $\hat{s}_{ij}(t)$ of all other agent $j$'s current strategies $s_j(t)$, i.e.,
\begin{align}\label{estimationofj}
\hat{\s}^{i}(t)=(\hat{s}_{i1}(t),\cdots, \hat{s}_{in}(t))^{\top}\in\mathbb{R}^n,\ i\in\mathcal{N},
\end{align}
so that agent $i$ could update his strategy $s_i(t+1)$ by adopting the best-response strategy to his estimation $\hat{\s}^{i}(t)$ under
\begin{align}\label{strategydyna}
s_i(t+1)=-\frac{1}{a_{ii}^i}{\big(\!\sum\limits_{j\neq i,j\in\mathcal{N}}\!a_{ij}^i \hat{s}_{ij}(t)+b_i^i\big)}.
\end{align}
In this case, letting all agents' estimations on   $\s(t)$ be stacked by
\begin{align}\label{Allestimations}
\hat{\s}(t)=\col(\hat{\s}^1(t),\ldots,\hat{\s}^n(t))\in\mathbb{R}^{n^2}.
\end{align}
Similar to \eqref{QFsys}, from  agents' strategy updating dynamics \eqref{strategydyna}, one establishes the best-response-based strategy profile dynamics as follows
\begin{align}\label{brdynamic}
\s(t+1)=N\hat{\s}(t)+\mathbf{c}, \ t\in\mathbb{N},
\end{align}
where $\mathbf{c}$ is defined in \eqref{QFsys} and $N\in\mathbb{R}^{n\times n^2}$  is given by
\begin{align}\nonumber
N=\diag\Big(&\big(0,-\frac{a_{12}^1}{a_{11}^1},\cdots,-\frac{a_{1n}^1}{a_{11}^1}\big), \big(-\frac{a_{21}^2}{a_{22}^2}, 0,\cdots,-\frac{a_{2n}^2}{a_{22}^2}\big),\\
&\ldots,(-\frac{a_{n1}^n}{a_{nn}^n},-\frac{a_{n2}^n}{a_{nn}^n},\cdots,0)\Big).
\end{align}

Next, we aim to design an updating dynamics for agent $i$'s estimation $\hat{\s}^i(t)$ of current strategy profile $\s(t)$ to achieve the convergence of \eqref{brdynamic} to the NE.
In this paper, we consider the scenario where each agent $i\in\mathcal{N}$  can only receive his neighbor $j$'s $\tau$-step-delay strategy $s_j(t-\tau)$ and estimation $\hat{\s}^j(t-\tau)$ at current stage $t$.
Different from \cite{Lou2016,Fang2022} where the information exchanged therein is from current stage, the design of estimation should be properly handled with respect to the delayed information in this paper.
Using the above delayed information, we propose the following updating dynamics for agent $i$'s estimation $\hat{\s}^{i}(t)$ defined in \eqref{estimationofj} given by
\begin{align}\nonumber\label{estiupdate3}
\hat{s}_{ij}(t)=&\hat{s}_{ij}(t-1)+\xi\Big[\sum\limits_{k\in \mathcal{N}_i} \big(\hat{s}_{kj}(t-\tau)-\hat{s}_{ij}(t-\tau)\big)\\
& +\alpha_{ij}(s_j(t-\tau)-\hat{s}_{ij}(t-\tau))\Big],\ t\in\mathbb{N},\ i,j\in\mathcal{N},
\end{align}
where  $\xi>0$ is the learning rate controlling how large of a step to take in the direction of the difference between agent $i$'s own estimation and  the delayed estimation from his neighbors, $\tau\in\mathbb{N}_+$ denotes the delay step, 
$\hat{s}_{ij}(k)=\phi_{ij}(k)$ for $k=-\tau,\ldots,-1$ and ${\s}(k)=\varphi(k)$ for $k=-\tau,\ldots,0$ with $\phi_{ij}$ and $\varphi$ being the initial conditions.
In \eqref{estiupdate3}, the difference $\hat{s}_{kj}(t-\tau)-\hat{s}_{ij}(t-\tau)$ is used to achieve the consensus among agents and the difference $s_j(t-\tau)-\hat{s}_{ij}(t-\tau)$ is used to reduce the bias between the estimations and true strategies.

\begin{algorithm}[h]
\caption{Seeking algorithm with delayed information exchange}
\label{SeekingAlgo}
\begin{algorithmic}[1]
\State Initialization: $\hat{s}_{ij}(k)$ and $\s(k)$ for $k=-\tau,\ldots,-1$, $\tau\in\mathbb{N}_+$, $\xi>0$, $t=-1$.
\State Iteration: $t\leftarrow t+1$
\State Estimation update:
\For{$i=1\rightarrow n$}
\For{$j=1\rightarrow n$}
        \eqref{estiupdate3}
    \EndFor
\EndFor
\State Strategy update:
\For{$i=1\rightarrow n$}
        \eqref{strategydyna}
\EndFor
\end{algorithmic}
\end{algorithm}
\vspace{-1pt}
Along with the designed updating dynamics \eqref{estiupdate3} of agent's estimation, the NE seeking algorithm with delayed information exchange is established in this paper as shown in Algorithm~\ref{SeekingAlgo}.
Unlike 
\cite{Pavel2020}, it can be observed from \eqref{brdynamic} that the update of strategy profile $\s(t+1)$ in this paper does not involve the information of strategy profile ${\s}(t)$.
Combining \eqref{strategydyna} with \eqref{estiupdate3}, the update of agent $i$'s strategy $s_i(t+1)$  depends on the strategy profile $\s(t-1)$ and estimation $\hat{\s}(t-1)$  when $\tau=1$, implying that updating strategy profile $\s(t+1)$ requires one-step-delay information.
When $\tau\geq2$, the manifestation of delay in updating  $\s(t+1)$ is more pronounced.
Noting that the  estimation updating dynamics \eqref{estiupdate3} degenerates to a delay-free case when $\tau=1$, which may lead to distinct strategy profile dynamics  compared with the case of $\tau\geq2$.
Next, we respectively investigate the influence of two cases of delay (i.e., $\tau \geq2$ and $\tau = 1$) on the information exchange and provide different conditions to guarantee the convergence of Algorithm~1 to the NE of the quadratic game $G$.

\subsection{Convergence Analysis of Algorithm~\ref{SeekingAlgo} With $\tau\geq2$}
In this subsection, we consider the first case where agents exchange $\tau$-step-delay information with their neighbors and $\tau\geq2$.
Denote $\bar{\s}\triangleq{\bf 1}_n\otimes \s$.
In this case, the designed estimation updating dynamics \eqref{estiupdate3} of all agents can be represented in the following compact form,
\begin{align}\label{estsystem3}
\hat{\s}(t)=\hat{\s}(t-1)+\xi P\hat{\s}(t-\tau)+\xi B\bar{\s}(t-\tau),
\end{align}
where $P$ and $B\in\mathbb{R}^{n^2\times n^2}$ are defined by
\begin{align}\label{Pdefi}
&P\triangleq-(\mathcal{L}\otimes \mathbf{I}_{n}+B),\\\label{Bdefi}
&B\triangleq\diag(\alpha_{11},\ldots,\alpha_{1n},\ldots,\alpha_{n1},\ldots,\alpha_{nn}),
\end{align}
with $\mathcal{L}$ being the Laplacian matrix of communication graph $(\mathcal{N},\mathcal{E})$.

Define $\mathfrak{s}(t)\triangleq\col(\bar{\s}(t),\hat{\s}(t))\in\mathbb{R}^{2n^2}$, which is composed of the strategy profile $\s(t)$ at stage $t$ and agents' estimations $\hat{\s}^i(t)$ of $\s(t)$.
To handle the convergence analysis of the designed seeking dynamics \eqref{strategydyna} along with the estimation dynamics \eqref{estiupdate3}, we combine \eqref{brdynamic} with \eqref{estsystem3} and construct an auxiliary system in the following,
\begin{align}\label{combinedsys}
    \mathfrak{s}(t)=H_1\mathfrak{s}(t-1)+H_2(\xi)\mathfrak{s}(t-\tau)+\bar{\mathbf{c}},\ t\in\mathbb{N},
\end{align}
where $\bar{\mathbf{c}}=\col({\bf 1}_n\!\otimes\mathbf{c}, {\bf 0}_{n^2})$, $H_1,H_2(\xi)\!\in\!\mathbb{R}^{2n^2\times2n^2}$ are defined by
\begin{align}\setlength{\arraycolsep}{1.4pt}\!\!\label{H1H2}
 H_1\!=\!\left[ {\begin{array}{cc}  \0_{n^2\times n^2} &  \mathbf{1}_n\otimes N  \\
\0_{n^2\times n^2} & \I_{n^2}\\
\end{array} } \right]\!,
H_2(\xi)\!=\!\xi\left[ {\begin{array}{cc}  {\0_{n^2\times n^2}} & \0_{n^2\times n^2} \\
B & P\\
\end{array} } \right]\!.
\end{align}
For ease of notation, let $H_2\triangleq H_2(\xi)$ hereinafter.
To analyze the convergence of system \eqref{combinedsys}, one constructs the following system
\begin{align}\label{transferedsys}
    \x(t)=H_1\x(t-1)+H_2\x(t-\tau),\ t\in\mathbb{N}_+,
\end{align}
where $\mathbf{x}(t)\triangleq\mathfrak{s}(t)-\mathfrak{s}(t-1)$ and $\x(k)=\psi(k)$ for $k=-\tau+1,\ldots,0$ with $\psi$ being the initial condition.
Next, we adopt the Lyapunov-Krasovskii functional theory to handle the convergence analysis of   \eqref{transferedsys} and then provide sufficient condition for a given learning rate $\xi$ and a given delay step $\tau$ to achieve the convergence of Algorithm~\ref{SeekingAlgo}.

\begin{thm}\label{thm1}
    Suppose that Assumption \ref{strctilydominant} is satisfied and all agents exchange $\tau$-step-delay ($\tau\geq2$) information at current stage $t$.
    Under Algorithm~\ref{SeekingAlgo} with  learning rate $\xi>0$,
    if there exist matrices $Q_1\in\mathbb{S}_+^{4n^2}$ and $Q_2,Q_3\in\mathbb{S}_+^{2n^2}$  such that
    \begin{align}\label{condi3}
        F(\tau,\xi)\prec 0,
    \end{align}
    where $F(\tau,\xi)=E_2^\top(\xi) Q_1E_2(\xi)-E_1^\top Q_1E_1+\diag(Q_2,-Q_2,{\bf0})+$ $D\!^\top\!\!(\tau,\xi)\diag((\tau\!-\!1) Q_3,\frac{1}{1-\tau} Q_3,\frac{-3\tau}{\tau^2-3\tau+2}Q_3) D(\tau,\xi)\!\in\!\mathbb{R}^{6n^2\!\times6n^2}$,
    \begin{align*}\setlength{\arraycolsep}{1pt}
        E_1^\top\!=\!\left[ {\begin{array}{cc} \I & -\I  \\
        {\bf0} &{\bf0}\\
     {\bf0}& \I \\
\end{array} } \right]\!\!,
        E_2^\top\!(\xi)\!=\!\left[ {\begin{array}{cc} H_1^\top & {\bf0}\\H_2^\top & -\I\\{\bf0} & \I  \\
\end{array} } \right]\!\!,
    D(\tau,\xi)\!=\!\left[ {\begin{array}{ccc} H_1\!\!-\!\I & H_2 & {\bf0} \\
    \I & -\I & {\bf0} \\ \I &  \I & -\frac{2}{\tau}\I \\
\end{array} } \right]\!\!,
\end{align*}
then the strategy profile dynamics of all agents  globally asymptotic- ally converge to the NE $\s^*$ of the quadratic game $G$.
\end{thm}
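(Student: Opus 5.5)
The plan is to build a Lyapunov--Krasovskii functional for the difference system \eqref{transferedsys}, show that $F(\tau,\xi)\prec0$ makes its forward difference negative definite, and then carry the resulting decay of $\x(t)$ back to $\mathfrak{s}(t)$ and finally to $\s(t)$.

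\textbf{Step 1 (functional and increment).} Write $\eta(t)\triangleq\x(t)-\x(t-1)$. Consider
\begin{align*}
V(t)=\zeta^\top(t)Q_1\zeta(t)+\sum_{k=t-\tau+1}^{t-1}\x^\top(k)Q_2\x(k)+\sum_{\theta=-\tau+2}^{0}\sum_{k=t+\theta}^{t}\eta^\top(k)Q_3\eta(k),
\end{align*}
with augmented state $\zeta(t)\triangleq\col\big(\x(t),\sum_{k=t-\tau+1}^{t-1}\x(k)\big)\in\mathbb{R}^{4n^2}$. This choice is suggested by the block structure of $E_1,E_2$. Introduce the stacked vector
\begin{align*}
\chi(t)=\col\Big(\x(t-1),\ \x(t-\tau),\ \sum_{k=t-\tau}^{t-1}\x(k)\Big)\in\mathbb{R}^{6n^2},
\end{align*}
with the exact indexing to be adjusted so that the identities below hold. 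Using \eqref{transferedsys}, one checks that $\zeta(t)=E_2(\xi)\chi(t)$ and $\zeta(t-1)=E_1\chi(t)$; this fixes the indexing. The first term of $\Delta V$ then equals $\chi^\top(E_2^\top Q_1E_2-E_1^\top Q_1E_1)\chi$. The $Q_2$ sum telescopes to $\x^\top(t-1)Q_2\x(t-1)-\x^\top(t-\tau)Q_2\x(t-\tau)$, which is the block $\diag(Q_2,-Q_2,\mathbf 0)$. The double sum gives $(\tau-1)\eta^\top(t)Q_3\eta(t)-\sum_{k=t-\tau+1}^{t-1}\eta^\top(k)Q_3\eta(k)$. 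Since $\eta(t)=(H_1-\I)\x(t-1)+H_2\x(t-\tau)$, the first row of $D(\tau,\xi)$ produces the $(\tau-1)Q_3$ term.

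\textbf{Step 2 (bounding the summation term; main obstacle).} The remaining sum $-\sum\eta^\top Q_3\eta$ must be bounded by a quadratic form in $\chi(t)$. For this I would invoke the discrete Wirtinger-based summation inequality of Seuret et al.\ \cite{Seuret2015}. It bounds the sum over a window of length $\tau-1$ from below by
\begin{align*}
\tfrac{1}{\tau-1}\Omega_0^\top Q_3\Omega_0+\tfrac{3(\tau)}{(\tau-1)(\tau-2)}\Omega_1^\top Q_3\Omega_1,
\end{align*}
with $\Omega_0=\x(t-1)-\x(t-\tau)$ and $\Omega_1=\x(t-1)+\x(t-\tau)-\tfrac{2}{\tau}\sum\x(k)$. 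These match the second and third rows of $D$ and the weights $\tfrac{1}{1-\tau}$ and $\tfrac{-3\tau}{\tau^2-3\tau+2}$. The factor $\tau^2-3\tau+2=(\tau-1)(\tau-2)$ explains why $\tau\ge2$ is needed; for $\tau=2$ the window degenerates, and I would treat that case separately by dropping the $Q_3$ term. The hardest part is getting the indices, the $\tfrac{2}{\tau}$ normalization and the window length consistent with the stated $D$. I would first verify this on the scalar Wirtinger inequality and then lift it to the matrix case. Combining the pieces yields $\Delta V(t)\le\chi^\top(t)F(\tau,\xi)\chi(t)\le-\epsilon\|\x(t-1)\|^2$.

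\textbf{Step 3 (from $\x$ back to $\s$).} Since $V\ge0$ and $\Delta V\le-\epsilon\|\x(t-1)\|^2$, summing gives $\sum_t\|\x(t)\|^2<\infty$. Because $F\prec0$ forces strict decrease along any nonzero $\chi$, a LaSalle-type argument or a standard bound $V\le c\|\x_t\|^2$ yields asymptotic, indeed exponential, stability of the linear delay system \eqref{transferedsys}. Its characteristic roots therefore lie in the open unit disk. Consequently $\det(\lambda^\tau\I-\lambda^{\tau-1}H_1-H_2)\neq0$ at $\lambda=1$, so $\I-H_1-H_2$ is invertible and \eqref{combinedsys} has a unique equilibrium $\mathfrak s^*$. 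The error $\mathfrak s(t)-\mathfrak s^*$ obeys the same homogeneous delay system and hence tends to zero. Finally, I would identify the equilibrium. At a fixed point, $P\hat{\s}^*+B\bar{\s}^*=0$, that is $(\mathcal L\otimes\I+B)(\hat{\s}^*-\bar{\s}^*)=0$ because $\mathcal L\1=0$. Since the graph is connected and $B\neq0$, the matrix $\mathcal L\otimes\I+B$ is positive definite, so $\hat{\s}^*=\bar{\s}^*=\1\otimes\s^*$. Then \eqref{brdynamic} gives $\s^*=M\s^*+\mathbf c$, which by Assumption~\ref{strctilydominant} is the NE. This step must be checked carefully: positive definiteness needs every block $j$ to have some $\alpha_{ij}=1$ pinning it, which holds because $j$ has a neighbor in the connected graph.
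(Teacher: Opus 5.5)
Your proposal is correct and is essentially the paper's proof. It uses the same Lyapunov--Krasovskii functional (your double sum equals the paper's weighted $V_3$), the same augmented vector up to a one-step shift (your $\chi(t)$ is the paper's $\vv(t-1)$), and the same Wirtinger-based summation inequality of Seuret et al.\ to produce $F(\tau,\xi)$.

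The differences are confined to the endgame, and they are slightly tidier than the paper's version:
\begin{itemize}
\item You get nonsingularity of $\I-H_1-H_2$ from stability itself (no characteristic root at $\lambda=1$), rather than from the paper's diagonal-dominance footnote.
\item You obtain $\mathfrak{s}(t)\to\mathfrak{s}^*$ by observing that the error obeys the same stable delay system, rather than passing through $\x(t)=\mathfrak{s}(t)-\mathfrak{s}(t-1)\to 0$.
\item You establish consensus via positive definiteness of $\mathcal{L}\otimes\I+B$ instead of citing Ye and Hu.
\item You correctly notice that the weight $\frac{-3\tau}{\tau^2-3\tau+2}$ is undefined at $\tau=2$, a case the paper's statement leaves unaddressed.
\end{itemize}
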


{\it Proof}. For verifying the asymptotic stability of system \eqref{transferedsys}, we construct the following Lyapunov-Krasovskii functional  candidate $V(\x(t))=V_1(\x(t))+V_2(\x(t))+V_3(\x(t))$,
where
\begin{align}\label{LKF1}\setlength{\arraycolsep}{1pt}
    &V_1(\x(t))= \chi(t)^\top Q_1 \chi(t),\\\label{LKF2}
    &V_2(\x(t))=\sum\limits_{i=t-\tau+1}^{t-1}\x^\top(i)Q_2\x(i),\\\label{LKF3}
    &V_3(\x(t))=\sum\limits_{i=t-\tau+2}^{t}(i-t+\tau-1)\z^\top(i)Q_3\z(i),
\end{align}
with $\chi(t)=\col\!\big(\x(t),\sum_{i=t-\tau+1}^{t-1}\x(i)\big)$, $\z(i)=\x(i)-\x(i-1)$.

To show the forward difference of $V(\x(t))$, we introduce the following augmented state
$\mathbf{v}(t)=\col(\x(t),\x(t-\tau+1),\z_1(t))$
with $\z_1(t)\triangleq\sum_{i=t-\tau+1}^{t}\x(i)$.
Then via \eqref{transferedsys} and $\z_1(t+1)-\x(t+1)=\z_1(t)-\x(t-\tau+1)$, one can rewrite $\chi(t)$ and $\chi(t+1)$  by
\begin{align}\!\label{v1t} \setlength{\arraycolsep}{2.0pt}
    \chi(t)\!=\!\left[ {\begin{array}{c}  \x(t) \\
    \z_1(t)-\x(t) \\
\end{array} } \right]
    \!=\!\left[ {\begin{array}{ccc} \I & {\bf0} & {\bf0}  \\
    -\I & {\bf0} &\I \\
\end{array} } \right]\mathbf{v}(t)\!=\! E_1  \mathbf{v}(t),
\end{align}
and
\begin{align}\!\label{v1t1}
\chi(t+1)  = \begin{bmatrix}
H_1 \mathbf{x}(t) + H_2 \mathbf{x}(t-\tau+1) \\ \mathbf{z}_1(t) - \mathbf{x}(t-\tau+1)
\end{bmatrix}\! = \!\underbrace{\begin{bmatrix}\setlength{\arraycolsep}{1.0pt}
 H_1 & \!\!\!\!\!\!H_2 & \!\!\!\!\mathbf{0} \\ \mathbf{0} & \!\!\!\!\!\!-\mathbf{I} & \!\!\!\!\mathbf{I}
 \end{bmatrix}}_{E_2(\xi)}\mathbf{v}(t).
\end{align}
With \eqref{v1t} and \eqref{v1t1}, one has
\begin{align}\nonumber
    \Delta V_1(t)&=V_1(\x(t+1))-V_1(\x(t))\\\label{deltaV1}
    &=\mathbf{v}^\top(t)\underbrace{\big(E_2^\top(\xi) Q_1E_2(\xi)-E_1^\top Q_1E_1\big)}_{F_1(\xi)} \mathbf{v}(t).
\end{align}

According to the function $V_2$ in \eqref{LKF2}, one has
\begin{align}\nonumber
    \Delta V_2(t)&=V_2(\x(t+1))-V_2(\x(t))\\\nonumber
    &=\sum\limits_{i=t-\tau+2}^{t}\x^\top(i)Q_2\x(i)-\sum\limits_{i=t-\tau+1}^{t-1}\x^\top(i)Q_2\x(i)\\\nonumber
    &=\x^\top(t)Q_2\x(t)-\x^\top(t-\tau+1)Q_2\x(t-\tau+1)\\\label{deltaV2}
    &=\mathbf{v}^\top(t)\diag(Q_2,-Q_2,{\bf0}) \mathbf{v}(t)\triangleq \mathbf{v}^\top(t)F_2\mathbf{v}(t).
\end{align}

Finally, for the function $V_3$ in \eqref{LKF3}, one can prove that
\begin{align*}
    \Delta V_3(t)&=V_3(\x(t+1))-V_3(\x(t))\\
    &=\sum\limits_{i=t-\tau+3}^{t+1}(i-t+\tau-2)\z^\top(i)Q_3\z(i)\\
    &\quad-\sum\limits_{i=t-\tau+2}^{t}(i-t+\tau-1)\z^\top(i)Q_3\z(i)\\
    &=(\tau-1) \z^\top(t+1)Q_3\z(t+1)-\!\!\!\!\sum\limits_{i=t-\tau+2}^{t}\!\!\!\z^\top(i)Q_3\z(i).
\end{align*}
It directly follows from Lemma 2 of \cite{Seuret2015} that the positive definiteness of matrix $Q_3$ implies $-\sum_{i=t-\tau+2}^{t}$ $\z^\top(i)Q_3\z(i)$ is bounded, i.e.,
\begin{align}\!\!\label{lemma2}\setlength{\arraycolsep}{1.0pt}
    -\!\!\!\!\sum\limits_{i=t-\tau+2}^{t}\!\!\!\z^\top(i)Q_3\z(i)\!\leq\! -\frac{1}{\tau-1}\!\left[ {\begin{array}{c} \phi_1 \\
    \phi_2 \\
\end{array} } \right]^\top\!\!\left[ {\begin{array}{cc} Q_3 & {\bf 0}\\
    {\bf 0} &\frac{3\tau}{\tau-2}Q_3\\
\end{array} } \right]\!
\left[ {\begin{array}{c} \phi_1 \\
    \phi_2 \\
\end{array} } \right],
\end{align}
where $\phi_1=\x(t)-\x(t-\tau+1)$, $\phi_2=\x(t)+\x(t-\tau+1)-\frac{2}{\tau}\z_1(t)$.

Notice that $\z(t+1)=\x(t+1)-\x(t)=(H_1-I)\x(t)+H_2\x(t-\tau+1)$, combined with \eqref{lemma2}, one further has
\begin{align}\setlength{\arraycolsep}{2.0pt}\nonumber
    \Delta V_3(t)\!\leq\!&~\mathbf{v}^\top(t)D^\top(\tau,\xi) \diag((\tau-1) Q_3,\frac{1}{1-\tau} Q_3,\\ \!\!\label{deltaV3}
    &\frac{-3\tau}{\tau^2-3\tau+2}Q_3)D(\tau,\xi) \mathbf{v}(t) \!\triangleq\! \mathbf{v}^\top(t)F_3(\tau,\xi) \mathbf{v}(t).
\end{align}

Combining the forward differences \eqref{deltaV1}-\eqref{deltaV3}, we arrive at
\begin{align*}
    \Delta &V(t)=\Delta V_1(t)+\Delta V_2(t)+\Delta V_3(t)
    \\
    &\leq \mathbf{v}^\top(t) [F_1(\xi) +F_2+F_3(\tau,\xi)] \mathbf{v}(t)=\mathbf{v}^\top(t)F(\tau,\xi)  \mathbf{v}(t)<0.
\end{align*}
Thus the asymptotical convergence of the system \eqref{transferedsys} is guaranteed.
Denote the equilibrium state of system \eqref{transferedsys} by $\x^e$, then
\begin{align}
    (\I-H_1-H_2)\x^e={\bf 0}_{2^{n^2}}.
\end{align}
Premultiplying both sides of the above equation by a  nonsingular matrix  {\footnotesize $  {\begin{bmatrix} \I & {\bf 0}\\
 -{\xi} B &  \I \\
\end{bmatrix} }  $},
we obtain
\begin{align}\label{NonsingularMatrix}\setlength{\arraycolsep}{2.0pt}
\left[ {\begin{array}{cc}  \I  & - {\1}_n\otimes N\\
 {\bf 0} &  -\xi B({\1}_n\otimes N)+\xi \mathcal{L}\otimes \I+\xi B \\
\end{array} } \right]\x^e= {\bf 0}_{2^{n^2}}.
\end{align}
Since the matrix $B({\bf 1}_n\otimes N-\I_{n^2})-\mathcal{L}\otimes \I_n$ is nonsingular\footnote{Notice that the matrix $B({\bf 1}_n\otimes N-\I_{n^2})-\mathcal{L}\otimes \I_n$ is irreducibly diagonally dominant, thus it is nonsingular according to Corollary 6.2.27 in \cite{Horn2012}.}, one has $\x^e={\0}_{2^{n^2}}$, that is,  the equilibrium state of system \eqref{transferedsys} is only the zero state.
Note that $\x(t)=\mathfrak{s}(t)-\mathfrak{s}(t-1)$, then system \eqref{combinedsys}  can be ensured to converge.
Next we prove that the equilibrium state $\mathfrak{s}^e=\col(\bar{\s}^e,\hat{\s}^e)$ of system \eqref{combinedsys} is exactly the augmented state composed of NE, i.e., $\bar{\s}^e=\hat{\s}^e={\bf 1}_{n}\otimes \s^*$.
From \eqref{combinedsys}, one has
\begin{align}\label{equilibriumstate}
\mathfrak{s}^e=H_1\mathfrak{s}^e+H_2\mathfrak{s}^e+\bar{\mathbf{c}}=(H_1+H_2) \mathfrak{s}^e+\bar{\mathbf{c}},
\end{align}

Recall \eqref{estiupdate3}, at the equilibrium state $\mathfrak{s}^e$, one has
\begin{align}\nonumber
\sum\limits_{k\in \mathcal{N}_i} \big(\hat{s}_{kj}^e-\hat{s}^e_{kj}\big) +\alpha_{ij}(s_j^e-\hat{s}_{ij}^e)=0,\ \forall i,j\in\mathcal{N}.
\end{align}
Then, $\hat{s}_{ij}^e={s}_{j}^e$ holds for any $i,j\in\mathcal{N}$ as the communication graph is undirected and connected \cite{Ye2017}.
As a result, one has $\bar{\s}^e=\hat{\s}^e$.

Furthermore, \eqref{equilibriumstate} implies  $\bar{\s}^e=({\bf 1}_n\otimes N)\hat{\s}^e={\bf 1}_n\otimes \mathbf{c}$, hence
\begin{align*}
    \s^e=N\hat{\s}^e+\mathbf{c}=N\bar{\s}^e+\mathbf{c}=N({\bf 1}_n\otimes {\s}^e)+\mathbf{c}=M{\s}^e+\mathbf{c},
\end{align*}
where $M$ is defined in \eqref{QFsys}.
Note that the matrix ${\bf I}-M$ is nonsingular by applying condition \eqref{basecondition} of Assumption \ref{strctilydominant},  one solves $\s^e=({\bf I}-M)^{-1}\mathbf{c}$, which  is exactly the NE $\s^*$ of the {quadratic} game $G$.
$\hfill\Box$

\begin{rmk}
It should be emphasized that the Lyapunov-Krasovskii functional candidate in the proof of Theorem~\ref{thm1} is only applicable for $\tau\geq2$.
When $\tau=1$, the lower limits of the summations in $V_2(\x(t))$ and $V_3(\x(t))$ in \eqref{LKF2}--\eqref{LKF3} are greater than their upper limits,
which leads to the undefined expressions, making it impossible to conduct proper analysis.
The convergence analysis for $\tau=1$ will be presented in Section III-C later.
\end{rmk}

\begin{rmk}
The structural complexity and high dimensionality of matrix $F(\tau,\xi)$ present significant challenges for theoretically determining a learning rate $\xi$ that ensures the convergence of Algorithm~\ref{SeekingAlgo}.
However, one insight from the numerical simulations in Example~\ref{ex3} shows that a smaller $\xi$ is required for convergence when dealing with either a larger delay step $\tau$ or a larger number $n$ of agents.
\end{rmk}

\begin{example}\label{ex1}
Consider a five-agent quadratic game where the parameters of each agent's payoff function are given by
$a_{11}^1=-5$, $a_{22}^2=-4,a_{33}^3=-8,a_{44}^4=-2,a_{55}^5=-3,
a_{ij}^1=-1,a_{ij}^2=$ $-0.8,a_{ij}^3=-1.5,a_{ij}^4=-0.4,a_{ij}^5=-0.5,\forall i\neq\!\! j,\mathbf{b}_i=$ $(1,2,3,4,5)^{\top}, i=1,2,3,4,5$.
The NE in this example is calculated by $\s^*=(-360/859,-25/567,-59/378,$ $1007/550,353/241)^\top$ and the agents communicate with each other through a wheel graph as shown in Fig.~\ref{network}, in which agent 1 is the central node.

\begin{figure}[t]
    \centering
    \includegraphics[width=0.2\textwidth]{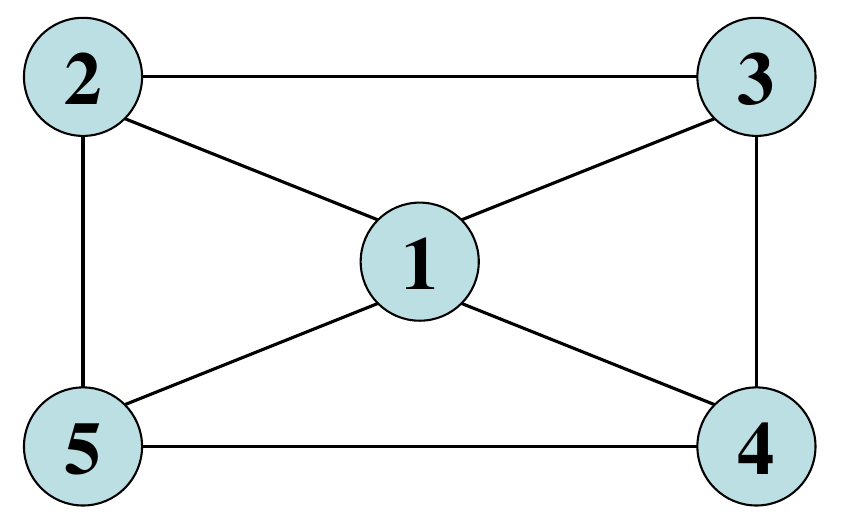}
    \caption{The wheel graph for interaction among agents in Example \ref{ex1}. }
    \label{network}
\end{figure}

Let $\xi=0.08$ be fixed in this example.
With the help of the LMI toolbox in MATLAB,
it can be found out that there exist symmetric positive definite matrices $Q_1$, $Q_2$, and $Q_3$ such that condition \eqref{condi3} in Theorem \ref{thm1} holds until $\tau=3$.
For $\tau=4$, however, the toolbox cannot output proper solutions.
This may imply that the designed seeking dynamics is divergent when $\tau=4$.
We show the strategy profile dynamics for $\tau=3$ and $\tau=4$ in Figs.~\ref{Dysimu2} and \ref{Dysimu3}, respectively, with the initial strategy profile being $\s(0)=(-1,-1,-1,1,1)^{\top}$ and the initial estimations of agents being $\s(t)=\s(0)$, $\hat{\s}^i(t)=\hat{\s}^i(0)$, $i=1,2,3$, for $t=-\tau+1,\ldots,-1$.
From Fig.~\ref{Dysimu2}, it can be observed that all agents' strategies converge to the NE, which is consistent with the conclusion of Theorem~\ref{thm1} since the positive definite matrices satisfying the condition \eqref{condi3} can be found in this case.
This verifies the effectiveness of Theorem \ref{thm1}.
Additionally, it can be observed from Fig.~\ref{Dysimu3} that the evolutions of all agents' strategies are divergent when $\tau=4$.
This suggests that when there exist no positive definite matrices satisfying the condition \eqref{condi3} proposed in Theorem \ref{thm1} under specified learning rate and delay step, the designed seeking dynamics may be divergent in that case.

\begin{figure}[t]
    \centering
    \includegraphics[width=0.4\textwidth]{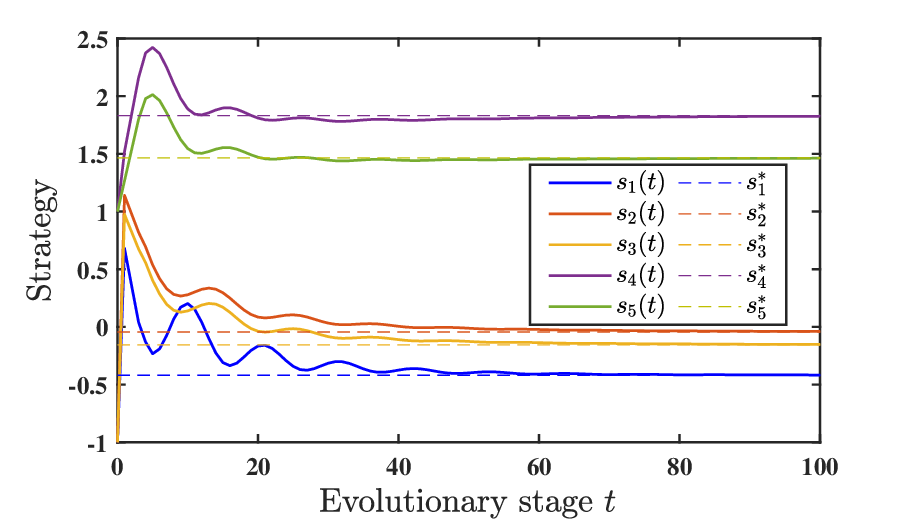}
    \caption{The evolutions of agents' strategies with $\tau=3$ and $\xi=0.08$}
    \label{Dysimu2}
\end{figure}
\begin{figure}[t]
    \centering
    \includegraphics[width=0.4\textwidth]{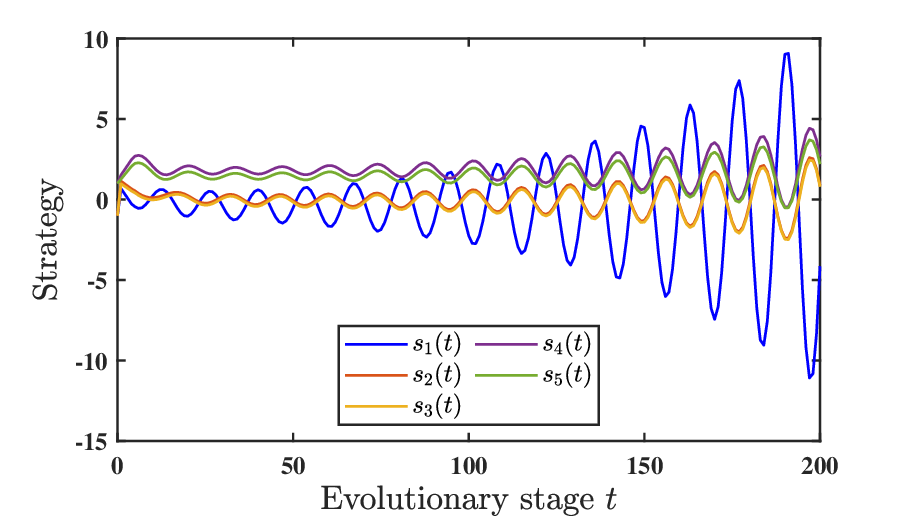}
    \caption{The evolutions of agents' strategies with $\tau=4$ and $\xi=0.08$}
    \label{Dysimu3}
\end{figure}
\end{example}

\subsection{Convergence analysis of Algorithm~\ref{SeekingAlgo} With $\tau=1$}

In this subsection, we consider the second case where all agents exchange exact one-step-delay information with neighbors, i.e., $\tau=1$ in \eqref{estiupdate3}.
In this case, the designed estimation updating dynamics \eqref{estiupdate3} for agent $i$ reduces to
\begin{align}\nonumber\label{estiupdate2}
\hat{s}_{ij}(t)=&\hat{s}_{ij}(t-1)+\xi\Big[\sum\limits_{k\in \mathcal{N}_i} \big(\hat{s}_{kj}(t-1)- \hat{s}_{ij}(t-1)\big)\\
&+\alpha_{ij}(s_j(t-1)-\hat{s}_{ij}(t-1))\Big],\ t\in\mathbb{N}.
\end{align}
Then the updating dynamics of all agents' estimations can be written  in the following compact form
\begin{align}\label{estsystem2}
\hat{\s}(t)=(\I_{n^2}+\xi {P})\hat{\s}(t-1)+\xi B\bar{\s}(t-1),
\end{align}
where $\bar{\s}={\bf 1}_n\otimes \s$, $P$ and $B$ are defined in \eqref{estsystem3}.
Rewrite \eqref{brdynamic} in the following form,
\begin{align}\label{realdynamic}
\bar{\s}(t+1)= {\bf 1}_n\otimes N\hat{\s}(t)+{\bf 1}_n\otimes \mathbf{c}.
\end{align}
Similarly, define the augmented state $\mathfrak{s}(t)\triangleq\col(\bar{\s}(t),\hat{\s}(t))\in\mathbb{R}^{2n^2}$,
from \eqref{realdynamic} and \eqref{estsystem2} one obtains  the following augmented system,
\begin{align}\label{addsystem1}
\mathfrak{s}(t+1)=H(\xi)\mathfrak{s}(t)+\bar{\mathbf{c}},
\end{align}
where $\bar{\mathbf{c}}$ is defined in \eqref{combinedsys} and $H(\xi)\in\mathbb{R}^{2n^2\times 2n^2}$ is given by
\begin{align}\label{eh}
H(\xi)=\left[ {\begin{array}{cc} \0_{n^2\times n^2} & {\bf 1}_n\otimes N\\
\xi B & \I_{n^2}+\xi P \\
\end{array} } \right].
\end{align}

Different from the asymptotical convergence result for the case of $\tau\geq2$ proposed in Theorem~\ref{thm1}, the next result shows that by adopting a proper learning rate $\xi$, all agents' strategies updated according to  Algorithm~\ref{SeekingAlgo} under the case of $\tau=1$ exponentially converge to the NE $\s^*$ of the quadratic game $G$.

\begin{thm}\label{thm2}
Suppose that Assumption \ref{strctilydominant} is satisfied and all agents exchange one-step-delay ($\tau=1$)  information at current stage $t$.
If the learning rate $\xi$ in \eqref{estiupdate2} satisfies
\begin{align}\label{condi2b}
\xi<{\delta}_1, \ \delta_1=\min_{i,j}\{(\alpha_{ij}+\mathcal{L}_{ii})^{-1}\},
\end{align}
then the strategy profile dynamics of all agents under Algorithm~\ref{SeekingAlgo} globally exponentially converge to the NE $\s^*$ of quadratic game $G$.
\end{thm}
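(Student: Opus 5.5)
The plan is to prove that $\rho(H(\xi))<1$ whenever $\xi<\delta_1$. The augmented system \eqref{addsystem1} is then a contractive affine iteration, so $\mathfrak{s}(t)$ converges exponentially to its unique fixed point. The identification of that fixed point with ${\bf 1}_n\otimes\s^*$ (in both blocks) can be copied verbatim from the last part of the proof of Theorem~\ref{thm1}, since \eqref{addsystem1} has the same equilibrium equation $\mathfrak{s}^e=H(\xi)\mathfrak{s}^e+\bar{\mathbf{c}}$ with $H(\xi)=H_1+H_2$ at $\tau=1$. The substance is therefore the spectral radius bound, which I would obtain by an $\infty$-norm (row-sum) argument after a diagonal rescaling.

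\textbf{Step 1 (row sums of $|H(\xi)|$).} I first compute the absolute row sums.

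For a top-block row (agent $i$'s copy), the only nonzero entries are those of the $i$th row of $N$. Their absolute sum is $m_i\triangleq\sum_{j\neq i}|a^i_{ij}/a^i_{ii}|<1$ by Assumption~\ref{strctilydominant}.

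For the bottom-block row indexed by $(i,j)$, the entries are:
\begin{itemize}
\item $\xi\alpha_{ij}$ in the $\bar{\s}$ column of $s_j$;
\item $1-\xi(\mathcal{L}_{ii}+\alpha_{ij})$ on the diagonal;
\item $\xi$ in the columns $(k,j)$ for $k\in\mathcal{N}_i$.
\end{itemize}
The hypothesis $\xi<\delta_1$ is exactly what makes the diagonal entry positive. Hence the row sum is $\xi\alpha_{ij}+1-\xi(\mathcal{L}_{ii}+\alpha_{ij})+\xi\mathcal{L}_{ii}=1$, so $\|H(\xi)\|_\infty=1$, which is not yet strict.

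\textbf{Step 2 (rescaling).} Fix $w$ with $\max_i m_i<w<1$ and set $D=\diag(w\I_{n^2},\I_{n^2})$. Then $\tilde H=D^{-1}H(\xi)D$ has top-right block $w^{-1}({\bf 1}_n\otimes N)$, whose absolute row sums are $m_i/w<1$. Its bottom-left block is $w\xi B$, so the row $(i,j)$ sums to $1-\xi\alpha_{ij}(1-w)$.

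This is strictly less than $1$ exactly when $\alpha_{ij}=1$, i.e.\ when $i$ is a neighbor of $j$. The rows with $\alpha_{ij}=0$, which include all rows $(j,j)$, still have sum exactly $1$.

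\textbf{Step 3 (the main obstacle: non-strict rows).} I would handle the non-strict rows with a weakly-chained-diagonal-dominance style argument on the nonnegative matrix $G=|\tilde H|$. Every row sum of $G$ is $\le 1$. In the directed graph of $G$, row $(i,j)$ has an edge to $(k,j)$ for every $k\in\mathcal{N}_i$, because these entries equal $\xi>0$.

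Since the graph is connected, $j$ has at least one neighbor $i'$ (so $\alpha_{i'j}=1$), and there is a path $i\to\cdots\to i'$ of length at most $n-1$. Consequently every row index reaches a strict row within $L\le n-1$ steps. Top-block rows are already strict.

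A standard induction then gives that every row sum of $G^{L+1}$ is $<1$: the mass reaching a strict row loses a fixed fraction, and the positive diagonal keeps the mass from leaving the chain. Hence
\begin{align*}
\rho(H(\xi))=\rho(\tilde H)\le \|G^{L+1}\|_\infty^{1/(L+1)}<1 .
\end{align*}
The care needed here is to ensure the positive self-loop entries $1-\xi(\mathcal{L}_{ii}+\alpha_{ij})>0$, guaranteed by $\xi<\delta_1$, so that paths of every length up to $L+1$ are available. This is where the strict inequality in \eqref{condi2b} is used again.

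\textbf{Step 4 (conclusion).} Since $\rho(H(\xi))<1$, the matrix $\I-H(\xi)$ is invertible, and the error $\mathfrak{s}(t)-\mathfrak{s}^e=H(\xi)^t(\mathfrak{s}(0)-\mathfrak{s}^e)$ decays like $C\gamma^t$ for any $\gamma\in(\rho(H(\xi)),1)$. This gives global exponential convergence. Finally, repeating the equilibrium computation from the proof of Theorem~\ref{thm1} shows $\hat s^e_{ij}=s^e_j$ and $\s^e=(\I-M)^{-1}\mathbf{c}=\s^*$.
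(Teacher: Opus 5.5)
Your argument is correct, but it reaches $\rho(H(\xi))<1$ by a different mechanism than the paper.

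\textbf{The paper's route.} It applies Ger\v{s}gorin's theorem to $H(\xi)$ itself.
\begin{itemize}
\item The top-block disks are centered at $0$ with radius $m_i<1$.
\item The bottom-block disks are centered at $h=1-\xi(\mathcal{L}_{ii}+\alpha_{ij})\in(0,1)$ with radius $1-h$, so they are internally tangent to the unit circle at $1$.
\item Hence the only possible unimodular eigenvalue is $\lambda=1$. This is excluded algebraically: an eigenvector for $1$ would put $\vv_2$ in the kernel of $B(\mathbf{1}_n\otimes N-\I)-\mathcal{L}\otimes\I$, which the paper asserts is nonsingular.
\end{itemize}

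\textbf{Your route.} You make the top rows strictly contractive through the similarity $\diag(w\I,\I)$. You then handle the remaining boundary rows combinatorially, by a weakly chained substochasticity argument on $|\tilde H|$ along the paths $(i,j)\to(k,j)$.

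\textbf{What each buys.} The paper's proof is shorter and shows geometrically, via the tangent disks, why $\xi<\delta_1$ is the natural threshold. Yours is self-contained, with three advantages.
\begin{itemize}
\item It never needs the nonsingularity fact; it actually reproves it. Indeed $\rho<1$ makes $\I-H(\xi)$ invertible, and the Schur complement of its $(1,1)$ block is $-\xi\big(B(\mathbf{1}_n\otimes N-\I)-\mathcal{L}\otimes\I\big)$.
\item It does not depend on whether the nonzero pattern of the $a^j_{jl}$ links the $n$ copies. The paper's footnote invokes irreducibility, which can fail when some $a^j_{jl}=0$.
\item Tracking the deficit along a chain gives an explicit rate: every row sum of $G^{L+1}$ is at most $1-\xi^{L}\min\{1-\max_i m_i/w,\ \xi(1-w)\}$.
\end{itemize}

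\textbf{A small correction to Step~3.} Positivity of the diagonal is needed only in Step~1, to make the absolute row sums equal to $1$. The chain argument needs no self-loops. A strict row remains strict in every power $G^m$, so a row at distance $d$ from a strict row is strict in $G^m$ for all $m\ge d+1$.
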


{\it Proof.} First, note from Ger\v{s}gorin Theorem (Th. 6.1.1 \cite{Horn2012}) that each eigenvalue $\lambda_i(H(\xi))$, $i=1,\ldots,2n^2$ of matrix $H(\xi)$ satisfies
\begin{align*}
\lambda_i(H(\xi))\in\bigcup\limits_{k=1}^{2n^2}\{z\in\mathbb{C}: |z-[H(\xi)]_{kk}|\leq\sum\limits_{j\neq k}|[H(\xi)]_{kj}|\}.
\end{align*}
We divide the Ger\v{s}gorin disks of the matrix $H(\xi)$ into two parts
\begin{align}
    \mathbf{D}_1=&\bigcup\limits_{k=1}^{n^2}\{z\in\mathbb{C}: |z-[H(\xi)]_{kk}|\leq\sum\limits_{j\neq k}|[H(\xi)]_{kj}|\},\\
    \mathbf{D}_2=&\bigcup\limits_{k=n^2+1}^{2n^2}\{z\in\mathbb{C}: |z-[H(\xi)]_{kk}|\leq\sum\limits_{j\neq k}|[H(\xi)]_{kj}|\}.
\end{align}
Since the first $n^2$ rows of $H(\xi)$ is $\left[ \0_{n^2\times n^2} , {\bf 1}_n\otimes N  \right]$,
there exist only $n$ sets of different disks with $[H(\xi)]_{kk}=0$, $k=1,\ldots,n^2$ and
\begin{align}
    \mathbf{D}_1=\bigcup\limits_{k=1}^{n}\{z\in\mathbb{C}: |z-0|\leq\sum\limits_{j\neq k}|[N]_{kj}|\}.
\end{align}
Since  the remaining $n^2$ rows of $H(\xi)$ is $[\xi B,\I+\xi P]$,  $[H(\xi)]_{kk}=[\I+\xi P]_{kk}$ for $k=n^2+1,\ldots,2n^2$ and $\mathbf{D}_2$ can be rewritten by
\begin{align*}
    \mathbf{D}_2=&\bigcup\limits_{i,j=1}^n\!\{z\in\mathbb{C}: |z-(1-\xi(\mathcal{L}_{ii}+\alpha_{ij}))|\leq\xi(\mathcal{L}_{ii}+\alpha_{ij})\}.
\end{align*}
For each eigenvalue $\lambda_i(H(\xi))$, if $\lambda_i(H(\xi))\in\mathbf{D}_1$, then there exists at least one $i_1\in\{1,2,\ldots,n\}$ such that
\begin{align}\label{sr1}
|\lambda_i(H(\xi))-0|\leq \sum\limits_{j\neq i_1}|[N]_{i_1j}|=\sum\limits_{j\neq i_1}{|a_{i_1j}^{i_1}|}/{|a_{i_1i_1}^{i_1}|}<1
\end{align}
under Assumption \ref{strctilydominant} and hence $|\lambda_i(H(\xi))|<1$.

If $\lambda_i(H(\xi))\in\mathbf{D}_2$,
then there exists at least one $i_2\in\{1,2,\ldots,n\}$ and one $j_2\in\{1,2,\ldots,n\}$ such that
\begin{align}\label{sr2}
|\lambda_i(H(\xi))-h_i|\leq1-h_i,
\end{align}
where $h_i\triangleq1-\xi(\mathcal{L}_{i_2i_2}+\alpha_{i_2j_2})$ satisfies $h_i\in(0,1)$ under \eqref{condi2b}.
The above inequality \eqref{sr2} implies that $\lambda_i(H(\xi))$ is located in the Ger\v{s}gorin disk whose center is $(h_i,0)^\top$ and radius is $1-h_i$.
It is worth noting that for any $h_i\in(0,1)$, the disk is always inscribed in the unit circle at  point $(1,0)^\top$ in the complex plane.
See Fig.~\ref{spectralradius} for illustrations of the above Ger\v{s}gorin disk and the unit circle.

Let $\lambda_i(H(\xi))$ be represented by $\lambda_i(H(\xi))=a_{i}+b_{i}\ii$, where $\ii$ is the imaginary unit, $a_i\in[2h_i-1,1]$ and $b_i\in[h_i-1,$ $1-h_i]$.
Note that the point $\zeta=(a_i,\sqrt{(1-h_i)^2-(a_i-h_i)^2})^\top$ is on the circle $\{(x,y)^\top\in\mathbb{R}^2:|(x-h_i)+y\ii|=1-h_i\}$ and $\zeta$ is with the same magnitude of Re-axis as $\lambda_i(H(\xi)$.
Therefore, one has
\begin{align}\nonumber
|\lambda_i(H(\xi))|&\leq|\zeta|=\sqrt{a_{i}^2+(1-h_i)^2-(a_i-h_i)^2}\\
&=\sqrt{1+2(a_i-1)h_i}\leq1,
\end{align}
where the first equality holds if and only if  $\lambda_i(H(\xi))=\zeta$ and the second equality holds if and only if $a_i=1$ and $b_i=0$.
This implies $|\lambda_i(H(\xi))|=1$ if and only if $\lambda_i(H(\xi))=1$.
For any other feasible value of $\lambda_i(H(\xi))$, one has $|\lambda_i(H(\xi))|<1$.

\begin{figure}[t]
    \centering
    \includegraphics[width=0.35\textwidth]{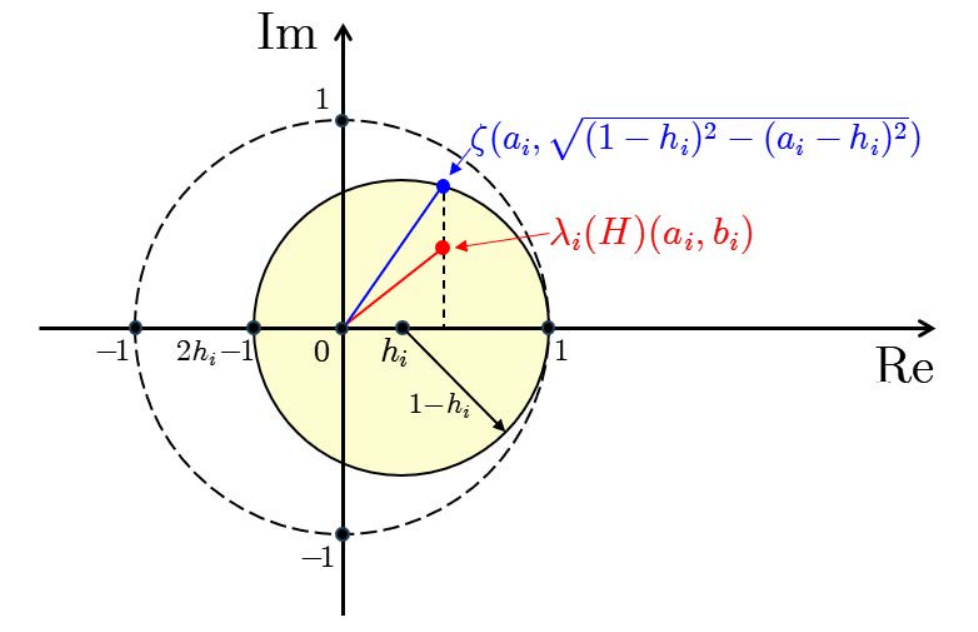}
    \caption{The illustrations of unit circle and the (yellow) Ger\v{s}gorin disk whose center is $(h_i,0)^\top$ and radius is $1-h_i$.}
    \label{spectralradius}
\end{figure}

Next we prove that the real number 1 is not an eigenvalue of the matrix $H(\xi)$ by contradiction.
Suppose that 1 is an eigenvalue of $H(\xi)$ whose corresponding eigenvector is denoted by $\vv=\col(\vv_1,\vv_{2})\in\mathbb{R}^{2n^2}$.
The components $\vv_i, i=1,2$ of $\vv$ are all $n^2$-dimensional column vectors satisfying that
\begin{align*}
\left[ {\begin{array}{cc} \0_{n^2\times n^2} & {\bf 1}_n\otimes N\\
\xi B & \I_{n^2}+\xi P \\
\end{array} } \right]\vv=1\cdot\vv.
\end{align*}
The above equation can be decomposed into
\begin{subequations}
\begin{align}\label{eigencondi1}
&\vv_1=({\bf 1}_n\otimes N)\vv_{2}, \\ \label{eigencondi2}
& \xi B\vv_{1}=-\xi P\vv_2=\xi(\mathcal{L}\otimes \I_n+B)\vv_{2}.
\end{align}
\end{subequations}
From \eqref{eigencondi1} and \eqref{eigencondi2} with $\xi>0$, one obtains
\begin{align*}
    B({\bf 1}_n\otimes N)\vv_{2}=& B\vv_{1}=(\mathcal{L}\otimes \I_n+B)\vv_{2},
\end{align*}
which is equal to
\begin{align}\label{homoequation}
    \big(B({\bf 1}_n\otimes N-\I_{n^2})-\mathcal{L}\otimes \I_n\big)\vv_{2}={\bf 0}_{n^2}.
\end{align}
Since the matrix  $B({\1}_n\otimes N)-B-\mathcal{L}\otimes \I_n$ as mentioned in \eqref{NonsingularMatrix} is nonsingular, the above homogeneous linear equation \eqref{homoequation} only has zero solution, i.e., $\vv_{2}={\bf 0}_{n^2}$.
Furthermore, from \eqref{eigencondi1}, one has
\begin{align}\label{allzero}
\vv_{1}=({\bf 1}_n\otimes N)\vv_{2}={\bf 0}_{n^2}.
\end{align}
Finally, it arrives at $\vv={\bf 0}_{2n^2}$, which contradicts the assumption that $\vv$ is an eigenvector.
Therefore, 1 is not an eigenvalue of $H$ and $|\lambda_i(H(\xi))|<1$ holds for any $i=1,2,\ldots,2n^2$, which implies $\rho(H(\xi))<1$.
According to Theorem 22.11 in \cite{Rugh1996}, the system \eqref{addsystem1} is exponentially stable, i.e., there exists $\beta\geq1$ and $\gamma\in(0,1)$ such that $\|\mathfrak{s}(t)\|_2\leq\beta\gamma^t\|\mathfrak{s}(0)\|_2$ with $\|\cdot\|_2$ being the Euclidean norm.

Denote the equilibrium state of system \eqref{addsystem1} by $\mathfrak{s}^e=\col(\bar{\s}^e,\hat{\s}^e)$. 
According to \eqref{addsystem1}, one has
\begin{align}
\mathfrak{s}^e=H(\xi)\mathfrak{s}^e +\bar{\mathbf{c}},
\end{align}
which is equivalent to \eqref{equilibriumstate} since $H(\xi)=H_1+H_2$ according to \eqref{H1H2} and \eqref{eh}.
Then the rest of the proof follows the proof of Theorem~\ref{thm1} and is omitted here.$\hfill\Box$

\begin{rmk}
The authors of \cite{Li2025} designed an asynchronous seeking algorithm that utilizes both a communication graph and an additional interference graph to ensure almost sure convergence to the NE under information delay.
In contrast, the algorithm proposed in this paper is synchronous, relies solely on the communication graph, and achieves deterministic convergence.
Additionally, unlike \cite{LiuJ2024} where the designed seeking algorithm sub-linearly converges to the NE, our designed  algorithm can further achieve an exponential convergence result in the case of $\tau=1$.
\end{rmk}

It should be pointed out that the above upper bound $\delta_1$ of learning rate $\xi$ for ensuring the convergence of the designed seeking dynamics is relatively conservative.
Using some learning rates larger than $\delta_1$ may also steer the seeking dynamics to the NE, please see   Fig.~\ref{learningratetau1} in Example \ref{ex2}.
However,  when the learning rate $\xi$ exceeds a bound, the seeking dynamics become unstable, as stated in the following.

\begin{thm}\label{thm3}
Suppose that Assumption \ref{strctilydominant} is satisfied and all agents  exchange one-step-delay ($\tau=1$)  information at current stage $t$.
If the learning rate $\xi$ in \eqref{estiupdate2} satisfies
\begin{align}\label{condi2c}
\xi>\delta_2, \ \delta_2=\frac{3n^2}{(n+1)\mathbf{tr}(\mathcal{L})},
\end{align}
where $\mathbf{tr}(\mathcal{L})$ is the trace of the Laplacian matrix $\mathcal{L}$ of communication graph $(\mathcal{N},\mathcal{E})$, then the strategy profile dynamics of all agents under Algorithm~\ref{SeekingAlgo} becomes unstable.
\end{thm}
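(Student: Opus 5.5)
Since \eqref{addsystem1} is linear with constant input, instability follows once $\rho(H(\xi))>1$. I will show this with a trace argument: the average of the eigenvalues of $H(\xi)$ is $\mathbf{tr}(H(\xi))/(2n^2)$, and $\rho(H(\xi))\geq|\mathbf{tr}(H(\xi))|/(2n^2)$. So it suffices to make $|\mathbf{tr}(H(\xi))|>2n^2$ for $\xi>\delta_2$.

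By \eqref{eh}, the upper-left block of $H(\xi)$ is zero. Hence
\begin{align*}
\mathbf{tr}(H(\xi))=\mathbf{tr}(\I_{n^2}+\xi P)=n^2-\xi\big(\mathbf{tr}(\mathcal{L}\otimes \I_n)+\mathbf{tr}(B)\big)=n^2-\xi\big(n\,\mathbf{tr}(\mathcal{L})+\mathbf{tr}(B)\big).
\end{align*}
From \eqref{Bdefi}, $\mathbf{tr}(B)=\sum_{i,j}\alpha_{ij}=\mathbf{1}^\top\mathcal{A}\mathbf{1}=\mathbf{tr}(\mathcal{L})$, because both equal the sum of all degrees. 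Therefore
\begin{align*}
\mathbf{tr}(H(\xi))=n^2-\xi(n+1)\mathbf{tr}(\mathcal{L}).
\end{align*}
If $\xi>\delta_2=3n^2/((n+1)\mathbf{tr}(\mathcal{L}))$, then $\xi(n+1)\mathbf{tr}(\mathcal{L})>3n^2$, so $\mathbf{tr}(H(\xi))<-2n^2$. Thus $|\mathbf{tr}(H(\xi))|>2n^2$. Since $\sum_k\lambda_k(H(\xi))=\mathbf{tr}(H(\xi))$ over $2n^2$ eigenvalues, at least one eigenvalue satisfies $|\lambda_k|>1$, i.e.\ $\rho(H(\xi))>1$.

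It remains to turn $\rho(H(\xi))>1$ into instability of the strategy profile dynamics. The error $\mathfrak{s}(t)-\mathfrak{s}^e$ obeys $\mathfrak{s}(t+1)-\mathfrak{s}^e=H(\xi)(\mathfrak{s}(t)-\mathfrak{s}^e)$. The equilibrium $\mathfrak{s}^e$ exists because $\I-H(\xi)$ is nonsingular, which was shown in the proof of Theorem~\ref{thm2} via the nonsingularity of $B(\mathbf{1}_n\otimes N-\I)-\mathcal{L}\otimes\I_n$. Take an initial error along the (real or complex-pair) eigendirection of an eigenvalue with modulus greater than one; then the error grows unboundedly, so the system is not stable (Theorem 22.11 in \cite{Rugh1996}).

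This growth must also be visible in $\s(t)$, not only in $\hat{\s}(t)$. By \eqref{realdynamic}, $\bar{\s}(t+1)=(\mathbf{1}_n\otimes N)\hat{\s}(t)+\mathbf{1}_n\otimes\mathbf{c}$. Take an eigenvector $\vv=\col(\vv_1,\vv_2)$ with $|\lambda|>1$. If $\vv_1=\0$, then $(\mathbf{1}_n\otimes N)\vv_2=\lambda\vv_1=\0$, and the lower block gives $(\I+\xi P)\vv_2=\lambda\vv_2$. In that case the estimation error still diverges but the strategy error does not. Ruling this out is the main obstacle. I would either accept instability of the augmented dynamics as the meaning of the statement, which is consistent with how Theorems~\ref{thm1}--\ref{thm2} are phrased via the augmented system, or argue that generic initial conditions excite a mode whose $\vv_1$-component is nonzero.
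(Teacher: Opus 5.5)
Your computation of $\mathbf{tr}(H(\xi))$, including $\mathbf{tr}(B)=\mathbf{tr}(\mathcal{L})$, and your conclusion $\rho(H(\xi))>1$ are exactly the paper's argument. Your bound $\rho(H(\xi))\geq|\mathbf{tr}(H(\xi))|/(2n^2)$ is actually cleaner: the paper asserts that some eigenvalue is a real number below $-1$, whereas the mean-value argument only yields an eigenvalue with real part below $-1$. For the final step the paper says only that instability of \eqref{addsystem1} ``further results in the instability'' of \eqref{brdynamic}. It never addresses the obstacle you raise.

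That obstacle is genuine, and your second option cannot remove it. Whether some eigenvalue with $|\lambda|>1$ has an eigenvector with $\vv_1\neq\mathbf{0}$ is a property of $H(\xi)$, not of the initial data. This is equivalent to $N\vv_2\neq\mathbf{0}$, since $\vv_1=\lambda^{-1}(\mathbf{1}_n\otimes N)\vv_2$. Assumption~\ref{strctilydominant} does not ensure it.

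Take $a_{ij}^i=0$ for all $j\neq i$. Then $N=\mathbf{0}$, so $\s(t)=\mathbf{c}=\s^*$ for all $t\geq1$ and every $\xi>0$. Meanwhile $\mathbf{tr}(\I_{n^2}+\xi P)=\mathbf{tr}(H(\xi))<-2n^2$ for $\xi>\delta_2$, so $\hat{\s}(t)$ diverges for generic initial data, even though every unstable mode is invisible in $\s$.

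Thus the statement is true only in your first reading: instability of the joint strategy--estimation dynamics of Algorithm~\ref{SeekingAlgo}. That is what you and the paper actually establish. Divergence of $\s(t)$ itself would need an extra hypothesis, e.g.\ that some eigenvalue with $|\lambda|>1$ has an eigenvector with $N\vv_2\neq\mathbf{0}$.

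Under the first reading your argument is complete up to one detail worth recording. For $\lambda\neq0$, $\vv_1=\mathbf{1}_n\otimes(\lambda^{-1}N\vv_2)$ already has the form required of $\bar{\s}$. Choosing $\hat{\s}(-1)-\hat{\s}^e=\lambda^{-1}\vv_2$, $\s(-1)-\s^*=\lambda^{-2}N\vv_2$, and $\s(0)-\s^*=\lambda^{-1}N\vv_2$ (taking real parts if $\lambda\notin\mathbb{R}$) then makes the unstable eigendirection an actual trajectory of the algorithm.
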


\emph{Proof}.
Let $\mathbf{tr}(H(\xi))$ denote the trace of matrix $H(\xi)$, which is exactly the sum of all eigenvalues of $H(\xi)$, i.e.,
\begin{align}\!\label{traceH}
\mathbf{tr}(H(\xi))=\sum\nolimits_{i,j\in\mathcal{N}} 1-\xi(\mathcal{L}_{ii}+\alpha_{ij})=\sum\nolimits_{i=1}^{2n^2}\lambda_i(H(\xi)).
\end{align}
According to the definition of matrix $B$ in \eqref{Bdefi}, one has
\begin{align}\label{Bstructure}
    \sum\nolimits_{j=1,j\neq i}^n \alpha_{ij}=\mathcal{L}_{ii},\ \alpha_{ii}=0,\ i=1,\ldots,n.
\end{align}
Substituting \eqref{Bstructure} into \eqref{traceH}, one has
\begin{align*}
    \mathbf{tr}(H(\xi))&\!=\!n^2\!-\!\big(\sum\limits_{i,j\in\mathcal{N}} \mathcal{L}_{ii}+\sum\limits_{i,j\in\mathcal{N}}\alpha_{ij}\big)\xi\\
    &\!=\!n^2\!-\!\big(n\!\sum\limits_{i\in\mathcal{N}} \mathcal{L}_{ii}+\!\sum\limits_{i\in\mathcal{N}}\mathcal{L}_{ii}\big)\xi\!=\!n^2\!-\!\xi(n+1)\mathbf{tr}(\mathcal{L}).
\end{align*}
Since $\xi$ satisfies condition \eqref{condi2c},  it holds $\xi(n+1)\mathbf{tr}(\mathcal{L})>3n^2$, which implies $\mathbf{tr}(H(\xi))<-2n^2$.
Denote $\bar{\lambda}(H(\xi))\triangleq(1/2n^2)\mathbf{tr}(H(\xi))$ as the mean value of eigenvalues of $H(\xi)$, one further has
\begin{align}
\bar{\lambda}(H(\xi))<-1.
\end{align}
The above inequality implies that there exists at least one eigenvalue $\lambda_{i}(H(\xi))$ such that $\lambda_{i}(H(\xi))<-1$.
As a result, the spectral radius $\rho(H(\xi))$ of matrix $H(\xi)$ will larger than 1.
According to Theorem 22.11 in \cite{Rugh1996},
$\rho(H(\xi))>1$ leads the system \eqref{addsystem1} to be unstable, which further results in the instability of the designed strategy profile dynamics \eqref{brdynamic}.
The proof is completed. $\hfill\Box$

\begin{table}[t]
  \caption{Value of $\delta_1$ and $\delta_2$ in different graphs with $n$ nodes}
  \label{LearningValues}
  \centering
  \setlength{\tabcolsep}{1.2mm}{
  \begin{tabular}{c|c|c|c|c}
  \midrule
  \specialrule{0em}{0.9pt}{0.9pt}\hline\rule{0pt}{6pt}
               & Ring graph     & Complete graph  & Star graph      & Wheel graph\\[1pt]  \hline
  {$\delta_1$} & $1/3$          & $1/n$           & $1/n$           & $1/n$ \\[1pt]  \hline
  {$\delta_2$} & $3n/2(n+1)$    & $3n/(n^2-1)$    & $3n^2/2(n^2-1)$ & $3n^2/4(n^2-1)$\\[1pt]  \hline
  \midrule
  \specialrule{0em}{0.3pt}{0.3pt}
  \end{tabular}}
\end{table}

\begin{rmk}
When the communication graph of game $G$ is given, one can calculate the values of the  bounds $\delta_1$ and $\delta_2$ in Theorems~\ref{thm2} and \ref{thm3} according to \eqref{condi2b} and \eqref{condi2c}, respectively.
For instance,  four typical communication topologies (including ring graph, complete graph, star graph, and wheel graph) are considered in this paper and the values of bounds $\delta_1$ and $\delta_2$ are shown in Table~\ref{LearningValues}.
It can be observed from  Table~\ref{LearningValues} that using the ring graph as the communication topology yields  a larger
guaranteed bound of convergence on the learning rate  $\xi$ than  the other three graphs especially for the case when the number $n$ of agents is large.
This is because the guaranteed bound of convergence of the ring graph is always given by $1/3$ no matter how many agents exist in the game.
Another point worth emphasizing is that the bound $\delta_2$ of  learning rate   proposed in Theorem~\ref{thm3} is closely related to the structure of communication graph.
It can be observed from \eqref{condi2c} that a denser graph yielding a smaller $\delta_2$.
\end{rmk}

It should be emphasized that the proposed instability bound $\delta_2$ is applicable to all communication topologies, and may appear relatively conservative for special types of topologies.
To illustrate this, two topologies, namely, the ring graph and the complete graph, are considered for numerical sensitivity analysis.
The maximal learning rate $\xi_{\max}$ (which ensures the algorithm's convergence) for these two topologies with different numbers of agents are presented in Table~\ref{Ring}.
From Table~\ref{Ring}, it can be observed that as  $n$ increases, $\delta_2$ for the ring graph tends to 1.5, which deviates from $\xi_{\max}$ that tends to a constant value of 0.455.
In contrast, the bound $\delta_2$ for the complete graph approaches $\xi_{\max}$ as $n$ increases.
This indicates that the proposed bound is somewhat satisfactory in the case of the complete graph.
    \begin{table}[t]
  \caption{The $\xi_{\max}$ in two graphs with different number of agents}
  \label{Ring}
  \centering
  \setlength{\tabcolsep}{1.2mm}{
  \begin{tabular}{c|c|c|c|c|c|c|c}
  \midrule
  \specialrule{0em}{0.9pt}{0.9pt}\hline\rule{0pt}{6pt}
   &  $n$           & 3     & 5     & 8     & 10    & 20    & 50    \\[1pt]  \hline
    \multirow{2}{*}{Ring graph}         & {$\delta_2$}    & 1.125 & 1.250 & 1.333 & 1.363 & 1.429 & 1.471      \\[1pt]  \cline{2-8}
             & {$\xi_{\max}$}  & 0.570 & 0.472 & 0.453 & 0.454 & 0.455 & 0.455 \\[1pt]  \hline
    \multirow{2}{*}{Complete graph} & {$\delta_2$}    & 1.125 & 0.625 & 0.381 & 0.303 & 0.150 & 0.06   \\[1pt]  \cline{2-8}
     & {$\xi_{\max}$}  & 0.570 & 0.374 & 0.243 & 0.200 & 0.100 & 0.04 \\[1pt]  \hline
  \midrule
  \specialrule{0em}{0.3pt}{0.3pt}
  \end{tabular}}
  \end{table}

\begin{example}\label{ex2}
\begin{figure}[t]
    \centering
    \includegraphics[width=0.4\textwidth]{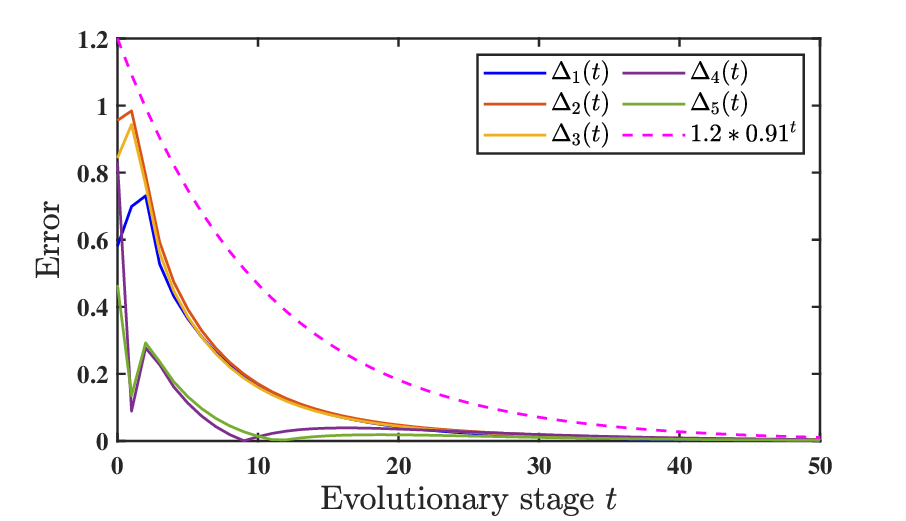}
    \caption{ The evolutions of agents' errors under the case of $\tau=1$}
    \label{Dysimu}
    \vspace{-10pt}
\end{figure}
Consider the example shown in Example~\ref{ex1} with five agents, here we verify the proposed results in Theorems~\ref{thm2} and \ref{thm3} for $\tau=1$.
Let the initial strategy profile the same as in Example~\ref{ex1} and the estimations of agents be respectively  $\hat{\s}^i(-1)=-\1_5$, $i=1,2,3$, $\hat{\s}^j(-1)=\1_5$, $j=4,5$.
Denote $\Delta_i(t)\triangleq\|s_i(t)-s_i^*\|_2$ to characterize the error between agent $i$'s strategy $s_i(t)$ and  the strategy $s_i^*$ in NE $\s^*$.
Here, $\|\cdot\|_2$ represents the Euclidean norm.
According to Table~\ref{LearningValues}, one has $\delta_1=0.2$ in \eqref{condi2b} of Theorem~\ref{thm2} since $n=5$ in this example.
Let $\xi=0.18<\delta_1$, all agents' error evolutions under the strategy updating dynamics \eqref{strategydyna} along with \eqref{estiupdate2} are shown in Fig.~\ref{Dysimu}.
From Fig.~\ref{Dysimu}, it can be observed that each agent's strategy under the designed seeking dynamics converges to the NE $\s^*$ as the error $\Delta_i(t)$ converges to zero being suppressed exponentially, which verifies Theorem \ref{thm2}.

Next we show the influence of learning rate $\xi$ on the convergence of the designed seeking dynamics in the case of $\tau=1$.
Let the termination condition be $\Delta=0.0001$ for strategy profile dynamics \eqref{brdynamic} along with estimation updating dynamics \eqref{estiupdate2}.
The maximal updating stage is  5000 and we define the terminal stage $\mathbf{T}_\Delta$  by
\begin{align}\label{terminalstage}
\mathbf{T}_\Delta=\arg\min\limits_{T\leq 5000}\{T:\|\s(T)-\s^*\|_2\leq\Delta\}.
\end{align}
Let the learning rate $\xi$ increase from 0.05 to 1, the corresponding terminal stages under different $\xi$ are shown in Fig.~\ref{learningratetau1}, from which it can be observed that the curve of terminal stages $\mathbf{T}$ first decreases and then increases as $\xi$ increases.
Note that when $\xi>1/3$, the designed seeking dynamics is unstable by calculating the spectral radius $\rho(H(\xi))>1$ for $\xi>1/3$, hence the curve rises sharply and we use the dashed line to indicate that even if it reaches the maximal updating stage, the strategy profile error fails to satisfy the terminal condition $\Delta$.
According to Table~\ref{LearningValues}, since agents exchange information on the wheel graph with $n=5$ in this example, one calculates $\delta_2=3n^2/4(n^2-1)=25/32>1/3$, thus the seeking dynamics is certainly unstable when $\xi>\delta_2$, which verifies Theorem~\ref{thm3}.

Moreover, it can be observed from Fig.~\ref{learningratetau1} that the designed seeking dynamics also converges when using the learning rate $\xi=0.08$ for $\tau=1$.
Compared with the divergent dynamics shown in Fig.~\ref{Dysimu3} where agents exchange 4-step-delay (i.e., $\tau=4$) information with the same learning rate $\xi=0.08$, it can be found that the learning rate should be reduced to ensure the convergence of the strategy update dynamics when agents exchange information with longer delays.

\begin{figure}[t]
    \centering
    \includegraphics[width=0.4\textwidth]{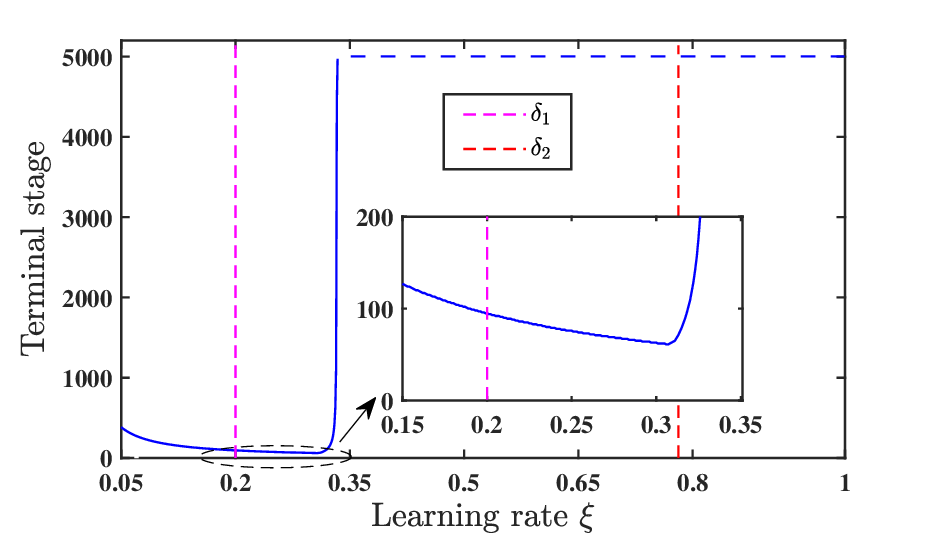}
    \caption{The terminal stages with different learning rates $\xi$}
    \label{learningratetau1}
\end{figure}
\end{example}
\begin{figure}[t]
    \centering
    \includegraphics[width=0.4\textwidth]{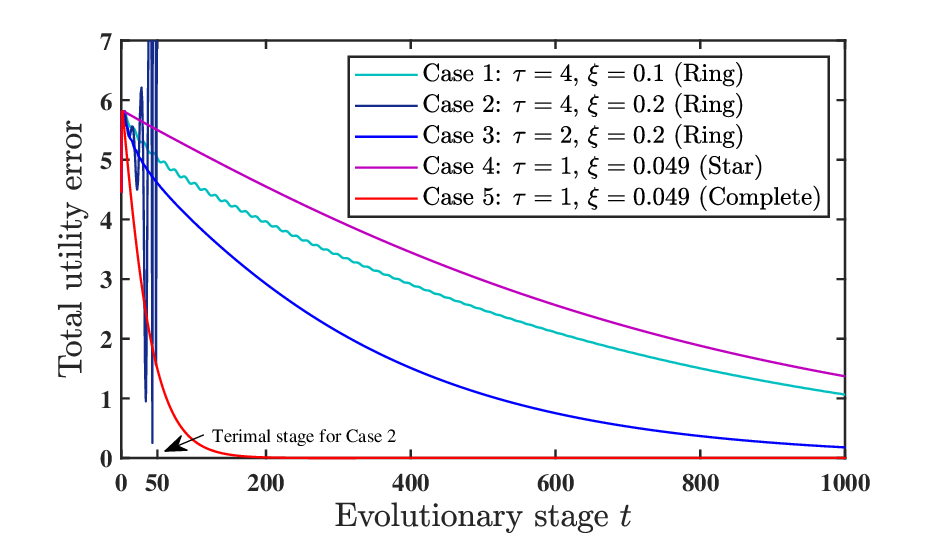}
    \caption{The evolutions of 20 agents' total utility errors }
    \label{Errordynamics}
    \vspace{-10pt}
\end{figure}
\begin{example}\label{ex3}
To validate the scalability of Algorithm~\ref{SeekingAlgo} for large networks, a more complex scenario with 20 agents is provided, where $a_{ii}^i=[\mathbf{a}]_i$, $i=1,2,\ldots,20$,  $\mathbf{a}=-(43,27,35,41,47,49,36,24,24$, $27,46,27,45,27,48,30,26,27,39,34)$, $a_{ij}^i=1$ for $i\neq j$, $\mathbf{b}_i=(1,2,\ldots,$ $20)^\top$ for $i=1,2,\ldots,20$.
By calculation, the total utilities of agents at the NE $\s^*$ of this game is -4.4563.
Five cases of parameters and topologies are considered in this example and the evolutions of agents' total utility errors $|\sum_{i\in \mathcal{N}}\big(J_i(\s(t))-J_i(\s^*)\big)|$ under these cases are shown in Fig.~\ref{Errordynamics}.
From Fig.~\ref{Errordynamics} it can be observed that except for \emph{Case~2} of $\tau=4$, $\xi=0.2$ in a ring graph (which violates the condition in Theorem~\ref{thm1}), the strategy profile dynamics in other cases can all converge to the NE of this quadratic game.
Since the strategy profile dynamics in \emph{Case~2} are divergent, we set the terminal stage to be 50 in this case and omit the remaining trajectory of  agents' total utility error for $t>50$.

Additionally, to provide some insight into setting a `good' learning rate that ensures the convergence of the proposed algorithm, we show the maximal learning rate $\xi_{\max}$ that guarantees the convergence of the proposed algorithm under two communication graphs with different delays and different number of agents in Figure~\ref{MLR}.
From Figure~\ref{MLR}(a) (resp. Figure~\ref{MLR}(b)), it can be observed that in both graphs, when the number of agents (resp. the delay step) is fixed, to ensure the convergence of the proposed algorithm, there is an inverse proportional relationship between the delay step $\tau$ (resp. the number $n$ of agents) and the learning rate $\xi$.

  \begin{figure}[t]\centering
  \subfloat{\centering
  \begin{minipage}[t]{0.5\linewidth}\centering
      \includegraphics[width=4.6cm]{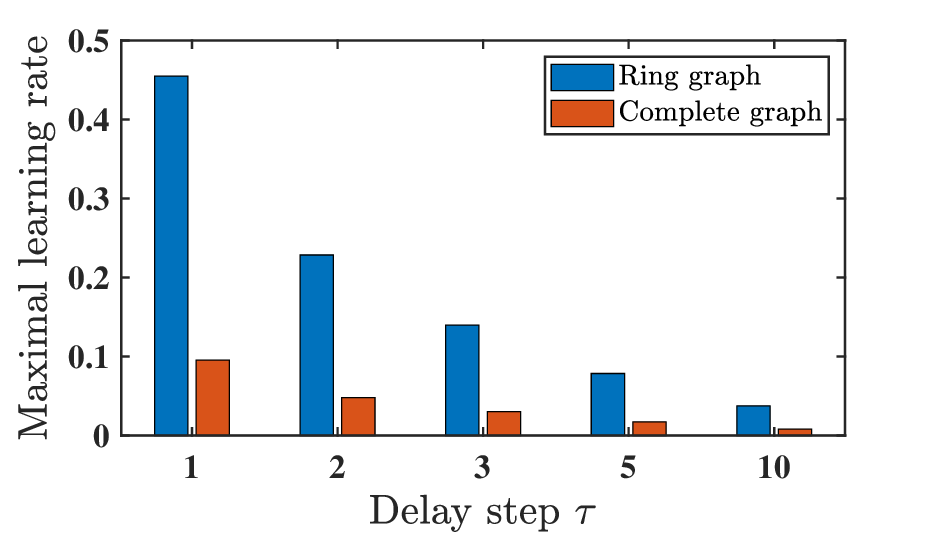}
  \centerline{\small (a)~Different $\tau$ with 20 agents}
  \end{minipage}
  \begin{minipage}[t]{0.5\linewidth}\centering
      \includegraphics[width=4.6cm]{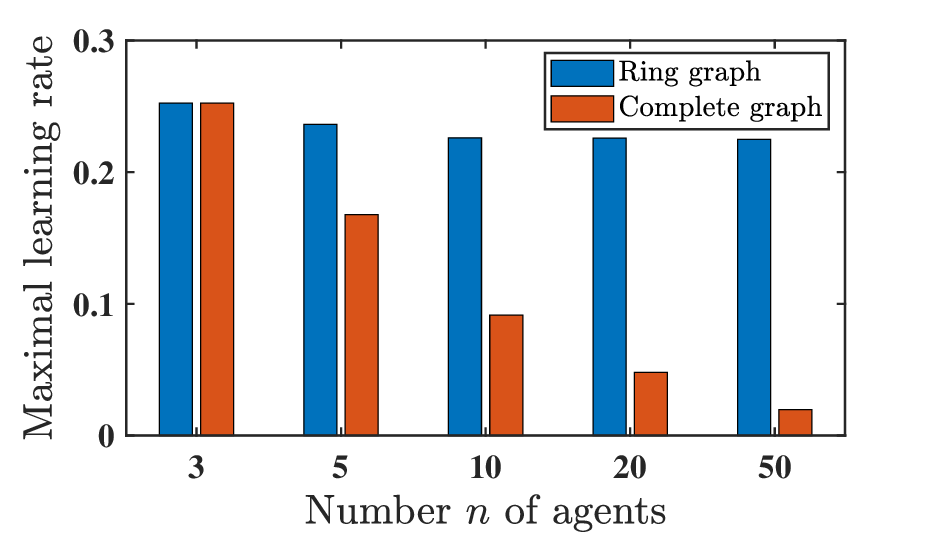}
  \centerline{\small (b)~Different $n$  with $\tau=2$}
  \end{minipage}}
  \caption{The maximal learning rate $\xi_{\max}$ that ensures the convergence of Algorithm~1 under two cases}
  \label{MLR}
  \vspace{-10pt}
  \end{figure}
\end{example}

\section{Conclusion}

In this paper, the seeking of NE in a non-cooperative quadratic game with selfish agents was investigated, where agents exchange both delayed strategy and estimation information with their neighbors.
Based on the delayed information, an estimation updating mechanism was designed for the best-response-based strategy updating dynamics.
When agents exchange multi-step-delay information, a sufficient condition on both the learning rate and the delay step was proposed to ensure the asymptotical convergence of 
 NE of the quadratic game.
In the case of exchanging one-step-delay information, a sufficient condition on the learning rate was proposed to achieve the exponential convergence.
Another condition on the learning rate was provided to prevent the divergence of the strategy updating dynamics in this case. 
The obtained results showed that even agents exchange delayed information with neighbors, using a proper estimation updating dynamics with an appropriate learning rate can still promote the convergence of agents' strategies to NE.
Numerical simulations were proposed to verify the developed theoretical results.

Future works may involve providing the existence condition of matrices $Q_1$, $Q_2$, and $Q_3$ in Theorem~\ref{thm1}, extending the payoff function to a more general setting, analyzing the influence of the delay step and learning rate on the convergence speed, and considering more realistic settings such as noise, packet dropout, or asynchronous updates.

\begin{thebibliography}{00}
\bibitem{Varga2023} B. Varga, J. Inga, and S. Hohmann, ``Limited information shared control: a potential game approach,'' IEEE T. Hum.-Mach. Syst., vol, 53, no. 2, pp. 282-292, 2023.
\bibitem{Varga2022} L. Z. Varga, ``Solutions to the routing problem: towards trustworthy auto- nomous vehicles,'' Artif. Intell. Rev., vol. 55, no. 7, pp. 5445-5484, 2022.
\bibitem{Liu2024} D. Liu, \emph{et~al}, ``Game of drones: Intelligent online decision making of multi-UAV confrontation,'' IEEE Trans. Emerg. Top. Comput. Intell., vol. 8, no. 2, pp. 2086-2100, 2024.
\bibitem{Frihauf2012} P. Frihauf, M. Krstic and T. Ba\c{s}ar, ``Nash equilibrium seeking in noncooperative games,'' IEEE Trans. Autom. Control, vol. 57, no. 5, pp. 1192-1207, 2012.
\bibitem{Martirosyan2024} E. Martirosyan, and M. Cao, ``Reinforcement learning for inverse linear-quadratic dynamic non-cooperative games,'' Syst. Control Lett., vol. 191, pp. 105883, 2024.
\bibitem{Xu2024} Y. Xu, M. Xiao, Y. Zhu, J. Wu, S. Zhang and J. Zhou, ``AoI-guaranteed incentive mechanism for mobile crowdsensing with freshness concerns,'' IEEE. Trans. Mob. Comput., vol. 23, no. 5, pp. 4107-4125, 2024.
\bibitem{Li2021} S. H. Q. Li, L. Ratliff and B. A\c{c}{\i}kme\c{s}e, ``Disturbance decoupling for gradient-based multi-agent learning with quadratic costs,'' IEEE Control Syst. Lett., vol. 5, no. 1, pp. 223-228, 2021.
\bibitem{Basar1997} T. Ba\c{s}ar and G. J. Olsder, \emph{Dynamic Noncooperative Game Theory}, 2nd ed. Philadelphia, PA: SIAM, 1999.
\bibitem{Nash1951} J. Nash, ``Non-cooperative game,''  Ann. Math., vol. 54, no. 2, pp. 286--295, 1951.
\bibitem{Monderer1996} D. Monderer and L. S. Shapley, ``Potential games,'' Games Econ. Behav.,'' vol. 14, no. 1, pp. 124-143, 1996.
\bibitem{Liu2019} T. Liu, J. Wang, X. Zhang and D. Cheng, ``Game theoretic control of multiagent systems,'' SIAM J. Control Optim., vol. 57, no. 3, pp. 1691-1709, 2019.
%


\bibitem{Yan2024a} Y. Yan and T. Hayakawa, ``Pareto-improving incentive mechanism for noncooperative dynamical systems under sustainable budget constraint,'' IEEE Trans. Autom. Control, vol. 69, no. 7, pp. 4291-4306, 2024.
\bibitem{Yan2024b} Y. Yan and T. Hayakawa, ``Incorporation of likely future actions of agents into pseudo-gradient dynamics of noncooperative games,'' IEEE Trans. Autom. Control, vol. 69, no. 11, pp. 7662-7677, 2024.
\bibitem{Ye2017} M. Ye, and G. Hu, ``Distributed Nash equilibrium seeking by a consensus based approach,'' IEEE Trans. Autom. Control, vol. 62, no. 9, pp. 4811--4818,  2017.
\bibitem{Wang2025} M. Wang, Y. Wu and S. Qin, ``Generalized Nash equilibrium seeking for noncooperative game with different monotonicities by adaptive neurodynamic algorithm,'' IEEE Trans. Neural Netw. Learn. Syst., vol. 36, no. 4, pp. 7637-7650, 2025.
\bibitem{Marden2009} J. R. Marden, G. Arslan, and J. S. Shamma, ``Joint strategy fictitious play with inertia for potential games,'' IEEE Trans. Autom. Control, vol. 54, no. 2, pp. 208--220, Feb. 2009.
\bibitem{Belgioioso2023} G. Belgioioso, and S. Grammatico, ``Semi-decentralized generalized Nash equilibrium seeking in monotone  aggregative games,'' IEEE Trans. Autom. Control, vol. 68, no. 1, pp.140-155, 2023.
\bibitem{Pavel2020} L. Pavel, ``Distributed GNE seeking under partial-decision information over networks via a doubly-augmented operator splitting approach,'' IEEE Trans. Autom. Control, vol. 67, no. 4, pp. 1584-1597,  2020.
\bibitem{Shamma2005} J. S. Shamma, and G. Arslan, ``Dynamic fictitious play, dynamic gradient play, and distributed convergence to Nash equilibria,'' IEEE Trans. Autom. Control, vol. 50, no. 3, pp. 312-327,  2005.
\bibitem{Govaert2021} A. Govaert, P. Ramazi and M. Cao, ``Rationality, imitation, and rational imitation in spatial public goods games,''  IEEE Trans. Control Netw. Syst., vol. 8, no. 3, pp. 1324--1335,  2021.
\bibitem{Hart2006} S. Hart, and A. Mas-Colell, ``Stochastic uncoupled dynamics and Nash equilibrium,'' Games Econ. Behav., vol. 57, no. 2, pp. 286-303, 2006.
\bibitem{Lou2016} Y. Lou, Y. Hong, L. Xie, G. Shi and K. H. Johansson, ``Nash equilibrium computation in subnetwork zero-sum games with switching communications,'' IEEE Trans. Autom. Control, vol. 61, no. 10, pp. 2920-2935,  2016.
\bibitem{Fang2022} X. Fang, G. Wen, J. Zhou, J. L\"{u}, and G. Chen, ``Distributed Nash equilibrium seeking for aggregative games with directed communication graphs,'' IEEE Trans. Circuits Syst. I-Regul. Pap., vol. 69, no. 8, pp. 3339-3353, 2022.
\bibitem{Xu2023} W. Xu, Z. Wang, G. Hu, and J. K\"{u}rths, ``Hybrid Nash equilibrium seeking under partial-decision information: an adaptive dynamic event-triggered approach,'' IEEE Trans. Autom. Control, vol. 68, no. 10, pp. 5862-5876,  2023.
\bibitem{Huang2023} S. J. Huang, J. L. Lei, and Y. G. Hong, ``A Linearly convergent distributed Nash equilibrium seeking algorithm for aggregative games,'' IEEE Trans. Autom. Control, vol. 68, no. 3, pp. 1753-1759,  2023.
\bibitem{Cramton1991} P. C. Cramton, ``Dynamic bargaining with transaction costs,'' Manage. Sci., vol. 37, no. 10, pp. 1221-1233, 1991.
\bibitem{Spector2022} D. Spector, ``Cheap talk, monitoring and collusion,'' Review of Industrial Organization, vol. 60, pp. 193--216, 2022.
\bibitem{Wang2022} X. Wang, X. Sun, M. Ye and K. Liu, ``Robust distributed Nash equilibrium seeking for games under attacks and communication delays,`` IEEE Trans. Autom. Control, vol. 67, no. 9, pp. 4892-4899, 2022.
\bibitem{Ai2020} X. Ai, ``Distributed Nash equilibrium seeking for networked games of multiple high-order systems with disturbance rejection and communication delay,'' Nonlinear Dyn., vol. 101, pp. 961-976, 2020.
\bibitem{Li2025} H. Li, $et~al.$, ``Convergence analysis of distributed generalized Nash equilibria seeking algorithm with asynchrony and delays,'' IEEE Trans. Autom. Control, vol. 70, no. 1, pp. 642-648,  2025.
\bibitem{LiuJ2024} J. Liu, L. Li, W. Xu, and D. W. C. Ho, ``Nash Equilibrium seeking over time-varying networks with time-varying delays,'' Proc. of European Control Conference, pp. 792-797, 2024.
\bibitem{Nedic2010} A. Nedic and A. Ozdaglar, ``Convergence rate for consensus with delays,''
J. Glob. Optim., vol. 47, pp. 437-456, 2010.
\bibitem{Chen2024} G. Chen and Y. Zhou, ``Dynamic estimation over distributed sensing network with communication delays,'' IEEE Trans. Ind. Inform., vol. 20, no. 4, pp. 5449-5458, 2024.
\bibitem{Seuret2015} A. Seuret, F. Gouaisbaut, and E. Fridman, ``Stability of discrete-Time systems with time-varying delays via a novel summation inequality,'' IEEE Trans. Autom. Control, vol. 60, no. 10, pp. 2740-2745, 2015.
\bibitem{Yan2022} Y. Yan, and T. Hayakawa, ``Stability analysis of Nash equilibrium for two-agent loss-aversion-based noncooperative switched systems,'' IEEE Trans. Autom. Control, vol. 67, no. 5, pp. 2505-2513, 2022.
\bibitem{Rugh1996} W. J. Rugh, \emph{Linear System Theory}, 2nd ed., Prentice Hall, Upper Saddle River, NJ, 1996.
\bibitem{Horn2012} R. A. Horn, and C. R. Johnson, ``Matrix Analysis,'' second editon, Cambridge University Press, 2012.

%
%
%



\end{thebibliography}
\end{document}